\newcommand{\email}[1]{\url{#1}}
\newtheorem{theorem}{Theorem}[section]
\newtheorem{corollary}[theorem]{Corollary}
\newtheorem{lemma}[theorem]{Lemma}
\newtheorem{definition}[theorem]{Definition}
\newtheorem{example}[theorem]{Example}
\begin{document}

\title{Technical Report: Complexity Analysis by Graph Rewriting Revisited%
\thanks{This research is supported by FWF (Austrian Science Fund) projects P20133.}%
\\\small{Technical Report}
}%
\author{Martin Avanzini\\
Institute of Computer Science\\
University of Innsbruck, Austria\\
\url{martin.avanzini@uibk.ac.at}
\and 
Georg Moser\\
Institute of Computer Science\\
University of Innsbruck, Austria\\
\url{georg.moser@uibk.ac.at}}
\maketitle

\begin{abstract}
\noindent
Recently, many techniques have been introduced
that allow the (automated) classification of
the runtime complexity
of term rewrite systems (TRSs for short).
In earlier work, the authors have shown
that for confluent TRSs, innermost polynomial runtime complexity
induces polytime computability of the functions defined.  
  
In this paper, we generalise the above result 
to full rewriting. Following our previous work, we exploit
graph rewriting. We give a new proof of the adequacy of 
graph rewriting for full rewriting that allows for
a precise control of the resources copied. In sum we
completely describe an implementation of rewriting
on a Turing machine (TM for short). We show that the runtime complexity of
the TRS and the runtime complexity of the TM is polynomially
related.
Our result strengthens the evidence that the complexity of
a rewrite system is truthfully represented through the length
of derivations. Moreover our result allows the classification of
non-deterministic polytime-computation based on runtime 
complexity analysis of rewrite systems. 
\end{abstract}

\newpage
\tableofcontents
\newpage

\section{Introduction}

Recently we see increased interest into studies of the (maximal) derivation
length of term rewrite system, compare for example~\cite{KW:2008,EWZ:2008,MSW:2008,KM:2009,MS09}.
We are interested in techniques to automatically
classify the complexity of term rewrite systems (TRS for short) and 
have introduced the \emph{polynomial path order} 
\POPSTAR and extensions of it, cf.~\cite{AM08,AM09}. 
\POPSTAR~is a restriction of the multiset path order~\cite{TeReSe} and whenever
compatibility of a TRS $\RS$ with \POPSTAR~can be shown then the runtime
complexity of $\RS$ is polynomially bounded. 
Here the \emph{runtime complexity} of a TRS measures the maximal number of rewrite steps as a function in the
size of the initial term, where the initial terms are restricted 
argument normalised terms (aka \emph{basic} terms).  

We have successfully implemented this technique.%
\footnote{Our implementation forms part of the 
\emph{Tyrolean Complexity Tool} (\TCT\ for short). For further information, 
see~\url{http://cl-informatik.uibk.ac.at/software/tct/}.}
As a consequence we can automatically verify for a given TRS $\RS$ that it admits at most 
polynomial runtime complexity. In this paper we study the
question, whether such results are restricted to \emph{runtime complexity} or can be applied
also for the (automated) classification of the intrinsic 
\emph{computational complexity} of the functions computed by the given
TRS $\RS$.
For motivation consider the TRS given in the
next example. It is not difficult to see that $\RSsat$ encodes
the function problem $\FSAT$ associated to the well-known 
satisfiability problem $\SAT$.
\begin{example}
\label{ex:1}
Consider the following TRS $\RSsat$:
\begin{alignat*}{4}
  1 \colon &\;&  \mif(\mtrue,t,e) & \to t &
  11 \colon &\;& \varepsilon = \varepsilon & \to \mtrue \\
  2 \colon &\;&  \mif(\mfalse,t,e) & \to e & 
  12 \colon &\;& \mO(x) = \mO(y) & \to x = y \\
  3 \colon &\;&  \choice(x \cons xs) & \to x &                                    
  13 \colon &\;& \mO(x) = \mZ(y) & \to \mfalse \\
  4 \colon &\;& \choice(x \cons xs) & \to \choice(xs) &                          
  14 \colon &\;& \mZ(x) = \mO(y) & \to \mfalse \\
  5 \colon &\;& \guess(\nil) & \to \nil &                          
  15 \colon &\;& \mZ(x) = \mZ(y) & \to x = y \\
  6 \colon &\;& \guess(c \cons cs) & \to \choice(c) \cons \guess(cs) &
  16 \colon &\;& \verify(\nil) & \to \mtrue \\
  7 \colon &\;& \member(x,\nil) & \to \mfalse &
  17 \colon &\;& \verify(l \cons ls) & \to \mif(\member(\neg l,ls), \mfalse, \verify(ls)) \\
  8 \colon &\;& \member(x,y \cons ys) & \to \mif(x = y, \mtrue, \member(x,ys)) &
  ~~ 18 \colon &\;& \issat'(c,a) & \to \mif(\verify(a),a,\unsat) \\
  9 \colon &\;& \neg \mO(x) & \to \mZ(x) & 
  19 \colon &\;& \issat(a) & \to \issat'(c,\guess(c)) \\
  10\colon &\;& \neg \mZ(x) & \to \mO(x)
\end{alignat*}
These rules are compatible with \POPSTAR\ and as a result
we conclude that the innermost runtime complexity induced
is polynomially.%
\footnote{To our best knowledge \TCT\ is currently the only complexity
tool that can provide a complexity certificate for the TRS $\RSsat$,
compare~\url{http://termcomp.uibk.ac.at}.}
\end{example}

$\FSAT$ is complete for the class of function problems over $\NP$ ($\FNP$ for short),
compare~\cite{Papa}.
This leads to the question, whether a characterisation 
of the runtime complexity of $\RSsat$ suffices to conclude that the function
computed by $\RSsat$ belongs to the class $\FNP$. 
The purpose of this paper is to provide a positive answer to this question.
More precisely, we establish the following results:
\begin{itemize}
\item We re-consider graph rewriting and provide a new proof of the adequacy of 
graph rewriting for full rewriting. This overcomes obvious inefficiencies
of rewriting, when it comes to the duplication of results.
\item We provide a precise analysis of the resources needed in
implementing graph rewriting on a Turing machine (TM for short).
\item Combining these results we obtain an efficient implementation of rewriting
on a TM. Based on this implementation our main result on the
correspondence between polynomial runtime complexity and polytime computability
follows.
\end{itemize}

Our result strengthens the evidence that the complexity of
a rewrite system is truthfully represented through the length
of derivations. Moreover our result allows the classification of
nondeterministic polytime-computation based on runtime 
complexity analysis of rewrite systems. 
This extends previous work (see~\cite{AM10}) that shows that 
for confluent TRSs, innermost polynomial runtime complexity
induces polytime computability of the functions defined. Moreover
it extends related work by Dal Lago and Martini (see~\cite{LM09,LM09b}) 
that studies the complexity of orthogonal TRSs, also applying
graph rewriting techniques.

\medskip
The paper is structured as follows. In Section~\ref{Preliminaries} we 
present basic notions, in Section~\ref{Grewrite} we (briefly) recall the central concepts of
graph rewriting. The adequacy theorem is provided in Section~\ref{Simulation} 
and in Section~\ref{Complexity} we show how rewriting can be implemented
efficiently. Finally we discuss our results in Section~\ref{Discussion},
where the above application to computational complexity is made precise.


\section{Preliminaries}\label{Preliminaries}

We assume familiarity with the basics of term rewriting, see~\cite{BN98,TeReSe}.
No familiarity with graph rewriting (see~\cite{TeReSe}) is assumed.
Let $R$ be a binary relation on a set $S$. We write $R^{+}$ for the transitive and
$R^{*}$ for the transitive and reflexive closure of $R$.  
An element $a \in S$ is \emph{$R$-minimal} if there exists no $b \in S$ 
such that $a \mathrel{R} b$.

Let $\VS$ denote a countably infinite set of variables and $\FS$ a 
signature, containing at least one constant. 
The set of terms over $\FS$ and $\VS$ is denoted as $\TERMS$.
The \emph{size} $\size{t}$ of a term $t$ is defined as usual.
A \emph{term rewrite system} (\emph{TRS} for short) $\RS$ over
$\TERMS$ is a \emph{finite} set of rewrite
rules $l \to r$, such that $l \notin \VS$ and $\Var(l) \supseteq \Var(r)$.
We write $\rew$ for the induced rewrite
relation. 
The set of defined function symbols is denoted as $\DS$, while 
the constructor symbols are collected in $\CS$, 
clearly $\FS = \DS \cup \CS$.
We use $\NF(\RS)$ to denote the set of normal forms of $\RS$ and
 write $s \rsn t$ if $s \rss t$ and $t \in \NF(\RS)$.
We define the set of \emph{values} $\Val \defsym \TA(\CS,\VS)$, 
and we define $\TB \defsym \{f(\seq{v}) \mid f \in \DS 
\text{ and } v_i \in \Val\}$ as the set of \emph{basic terms}.
Let $\hole$ be a fresh constant. Terms over $\FS \cup \set{\hole}$ and $\VS$
are called \emph{contexts}. 
The empty context is denoted as $\hole$. 
For a context $C$ with $n$ holes, 
we write $C[\seq{t}]$ for the term obtained by replacing the holes 
from left to right in $C$ with the terms $\seq{t}$.

A TRS is called \emph{confluent} if for all $s, t_1, t_2 \in \TERMS$
with $s \rss t_1$ and $s \rss t_2$ there exists a term $t_3$ such that
$t_1 \rss t_3$ and $t_2 \rss t_3$.
The \emph{derivation length} of a terminating term $s$ with respect to
$\to$ is defined as ${\dl(s,\to)} \defsym \max\{ n \mid \exists t. \; s \to^n t \}$,
where $\to^n$ denotes the $n$-fold application of $\to$. 
The \emph{runtime complexity function} $\rcR$ with respect to
a TRS $\RS$ is defined as 
$\rcR(n) \defsym \max\{ \dl(t, \rew) \mid \text{$t \in \TB$ and $\size{t} \leqslant n$} \}$.


\section{Term Graph Rewriting}\label{Grewrite}

In the sequel we introduce the central concepts of~\emph{term graph rewriting}
or \emph{graph rewriting} for short. 
We closely follow the presentation of \cite{AM10}, 
for further motivation of the presented notions we kindly 
refer the reader to \cite{AM10}.
Let $\RS$ be a TRS over a signature $\FS$. We keep $\RS$ and $\FS$ 
fixed for the remainder of this paper. 

A \emph{graph} $G=(\nodes[G],\suc[G],\lab[G])$ over the set $\LS$ of \emph{labels} is a structure
such that $\nodes[G]$ is a finite set, the \emph{nodes} or \emph{vertexes},
$\suc \colon \nodes[G] \to \nodes[G]^{\ast}$ is a mapping that associates
a node $u$ with an (ordered) sequence of nodes, called the \emph{successors} of $u$.
Note that the sequence of successors of $u$ may be empty: $\suc[G](u) = []$.
Finally $\lab[G] \colon \nodes[G] \to \LS$ is a mapping that associates each
node $u$ with its \emph{label} $\lab[G](u)$.
Typically the set of labels $\LS$ is clear from context and not explicitly mentioned. 
In the following, nodes are denoted by $u,v, \dots$ possibly followed by subscripts. 
We drop the reference to the graph $G$
from $\nodes[G]$, $\suc[G]$, and $\lab[G]$, i.e., 
we write $G = (\Nodes,\suc,\Lab)$ if no confusion can arise from this.
Further, we also write $u \in G$ instead of $u \in \Nodes$.

Let $G=(\Nodes,\suc,\Lab)$ be a graph and let $u \in G$. Consider
$\suc(u) = [\seq[k]{u}]$. We call $u_i$ ($1 \leqslant i \leqslant k$) the \emph{$i$-th successor} of $u$
(denoted as $u \suci[G]{i} u_i$). 
If $u \suci[G]{i} v$ for some $i$, then we simply write
$u \reach[G] v$. A node $v$ is called \emph{reachable} from $u$ if $u \reachtr[G] v$,
where $\reachtr[G]$ denotes the reflexive and transitive closure of $\reach[G]$.
We write $\reachtir[G]$ for $\reach[G] \cdot \reachtr[G]$.
A graph $G$ is \emph{acyclic} if $u \reachtir[G] v$ implies $u \not= v$ and
$G$ is \emph{rooted} if there exists a unique node $u$ such that every other node 
in $G$ is reachable from $u$. The node $u$ is called the \emph{root} $\grt(G)$ of $G$. 
The \emph{size} of $G$, i.e., the number of nodes, is denoted as $\size{G}$.
The \emph{depth} of $S$, i.e., the length of 
the longest path in $S$, is denoted as $\depth(S)$.
We write $\subgraphAt{G}{u}$ for the subgraph of $G$ reachable from $u$.

Let $G$ and $H$ be two term graphs, possibly sharing nodes. 
We say that $G$ and $H$ are \emph{properly sharing} if 
$u \in G \cap H$ implies $\lab[G](u) = \lab[H](u)$ and $\suc[G](u) = \suc[H](u)$.
If $G$ and $H$ are properly sharing, we write $G \graphUnion H$ for their union.
\begin{definition}
  \label{d:termgraph}
  A \emph{term graph} (with respect to $\FS$ and $\VS$) is an \emph{acyclic} and \emph{rooted} graph 
  $S = (\Nodes,\suc,\Lab)$ over labels $\FS \cup \VS$. 
  Let $u \in S$ and suppose $\Lab(u) = f \in \FS$ such that
  $f$ is $k$-ary. Then $\suc(u) = [\seq[k]{u}]$. 
  On the other hand, if $\Lab(u) \in \VS$ then $\suc(u) = []$. 
  We demand that any variable node is \emph{shared}. That is,
  for $u \in S$ with $\Lab(u) \in \VS$, if $\Lab(u) = \Lab(v)$ for some $v \in \Nodes$ 
  then $u=v$.
\end{definition}
Below $S, T, \dots$ and $L,R$, possibly followed by subscripts, always denote term graphs.
We write $\GRAPHS$ for the set of all term graphs with respect to $\FS$ and $\VS$. 
Abusing notation from rewriting we set $\Var(S) \defsym \set{u \mid u \in S, \Lab(u) \in \VS}$,  
the set of \emph{variable nodes} in $S$. 
We define the term $\trepr(S)$ \emph{represented} by $S$ as follows:
$\trepr(S) \defsym x$ if $\Lab(\grt(S)) = x \in \VS$ and
$\trepr(S) \defsym f(\trepr(\subgraphAt{S}{u_1}), \dots, \trepr(\subgraphAt{S}{u_k}))$ for $\Lab(\grt(S)) = f \in \FS$ 
and $\suc(\grt(S)) = [\seq[k]{u}]$.

We adapt the notion of \emph{positions} in terms to positions in graphs in the obvious way.
Positions are denoted as $p,q, \dots$, possibly followed by subscripts. 
For positions $p$ and $q$ we write $p q$ for their concatenation. 
We write $p \leqslant q$ if $p$ is a prefix of $q$, i.e., $q = p p'$ for some 
position $p'$. 
The size $\size{p}$ of position $p$ is defined as its length.
Let $u \in S$ be a node.
The set of \emph{positions} $\Pos[S](u)$ of $u$ is 
defined as $\Pos[S](u) \defsym \set{\varepsilon}$ if $u = \grt(S)$ and
$\Pos[S](u) \defsym \set{i_1 \cdots i_k  \mid \grt(S) \suci[S]{i_1} \cdots \suci[S]{i_{k}} u}$
otherwise.
The set of all positions in $S$ is $\Pos[S] \defsym \bigcup_{u \in S} \Pos[S](u)$.
Note that $\Pos[S]$ coincides with the set of positions of $\trepr(S)$.
If $p \in \Pos[S](u)$ we say that $u$ \emph{corresponds} to $p$. 
In this case we also write $\subgraphAt{S}{p}$ for the subgraph $\subgraphAt{S}{u}$.
This is well defined since exactly one node corresponds to a position $p$.
One verifies $\trepr(\subgraphAt{S}{p}) = \subtermAt{\trepr(S)}{p}$ for all $p \in \Pos[S]$.
We say that $u$ is (strictly) \emph{above} a position $p$ if $u$ corresponds to a
position $q$ with $q \leqslant p$ ($q < p$).
Conversely, the node $u$ is \emph{below} $p$ if $u$ corresponds to $q$ with $p \leqslant q$.

By exploiting different degrees of \emph{sharing}, a term $t$ can 
often be represented by more than one term graph.
Let $S$ be a term graph and let $u \in S$ be a node.
We say that $u$ is \emph{shared} 
if the set of positions $\Pos[S](u)$ is not singleton. 
Note that in this case, the node $u$ represents more than one subterm of $\trepr(S)$.
If $\Pos[S](u)$ is singleton, then $u$ is \emph{unshared}.
The node $u$ is \emph{minimally shared} if it is a variable node or unshared otherwise
(recall that variable nodes are always shared).
We say $u$ is \emph{maximally shared} if
$\trepr(\subgraphAt{S}{u}) = \trepr(\subgraphAt{S}{v})$ 
implies $u = v$.
The term graph $S$ is called \emph{minimally sharing} (\emph{maximally sharing})
if all nodes $u \in S$ are minimally shared (maximally shared).
Let $s$ be a term.
We collect all minimally sharing term graphs representing $s$ in the set $\Tree(s)$.
Maximally sharing term graphs representing $s$ are collected in $\Shared(s)$.

We now introduce a notion for replacing a subgraph $\subgraphAt{S}{u}$ 
of $S$ by a graph $H$.
\begin{definition}
Let $S$ be a term graph and let $u,v \in S$ be two nodes.
Then $\replaceNode{S}{u}{v}$ denotes the 
\emph{redirection} of node $u$ to $v$:
set $r(u) \defsym v$ and $r(w) \defsym w$ for all $w \in S \setminus \set{u}$.
Set $\Nodes' \defsym (\nodes[S] \cup \set{v}) \setminus \set{u}$
and for all $w \in \Nodes'$, $\suc'(w) \defsym r^\ast(\suc[S](w))$
where $r^\ast$ is the extension of $r$ to sequences.
Finally, set $\replaceNode{S}{u}{v} \defsym (\Nodes',\suc',\lab[S])$.

Let $H$ be a rooted graph over $\FS \cup \VS$.
We define $\replaceAt{S}{u}{H} \defsym \subgraphAt{(\replaceNode{S}{u}{\grt(H)} \graphUnion H)}{v}$
where $v = \grt(H)$ if $u = \grt(S)$ and $v = \grt(S)$ otherwise.
Note that $\replaceAt{S}{u}{H}$ is again a term graph if
$u \not\in H$ and $H$ acyclic.
\end{definition}

The following notion of \emph{term graph morphism} plays the r\`ole of substitutions.
\begin{definition}
  \label{d:morphismus}
  Let $L$ and $T$ be two term graphs. 
  A \emph{morphism} from $L$ to $T$ (denoted $m \colon L \to T$) is a function 
  $m \colon \nodes[L] \to \nodes[T]$ such that 
  $m(\grt(L)) = \grt(T)$, and for all $u \in L$ with $\lab[L](u) \in \FS$, 
  (i) $\lab[L](u) = \lab[T](m(u))$ and (ii) $m^\ast(\suc[L](u)) = \suc[T](m(u))$.
\end{definition}
The next lemma follows essentially from assertion (ii) of Definition \ref{d:morphismus}.
\begin{lemma} 
\label{l:morph:subgraph}
If $m \colon L \to S$ then for any $u \in S$ we have $m \colon \subgraphAt{L}{u} \to \subgraphAt{S}{m(u)}$.
\end{lemma}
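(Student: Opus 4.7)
The plan is to show that the set-theoretic restriction of $m$ to $\nodes[\subgraphAt{L}{u}]$ is well-defined as a function into $\nodes[\subgraphAt{S}{m(u)}]$ and then check each clause of Definition~\ref{d:morphismus}. (I read ``$u \in S$'' in the statement as a typo for $u \in L$, since $m(u)$ must be meaningful.) The subgraph $\subgraphAt{L}{u}$ has, by construction, $u$ as its root and consists exactly of those nodes reachable from $u$ in $L$; likewise for $\subgraphAt{S}{m(u)}$.

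First I would establish the range condition: for every $v \in \subgraphAt{L}{u}$, the image $m(v)$ lies in $\subgraphAt{S}{m(u)}$. This I would prove by induction on the shortest reach-path $u \reach[L]^{n} v$. The base case $v = u$ is immediate. For the step, suppose $u \reach[L]^{n} w \suci[L]{i} v$. Then $\suc[L](w)$ is non-empty, so in particular $\lab[L](w) \in \FS$, whence clause (ii) of Definition~\ref{d:morphismus} applies to give $m^{\ast}(\suc[L](w)) = \suc[S](m(w))$. Consequently $m(w) \suci[S]{i} m(v)$, and combining with the induction hypothesis $m(u) \reachtr[S] m(w)$ yields $m(u) \reachtr[S] m(v)$, as required.

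Next I would verify the three morphism conditions for the restriction. The root condition is trivial since the root of $\subgraphAt{L}{u}$ is $u$ and the root of $\subgraphAt{S}{m(u)}$ is $m(u)$. For a node $w \in \subgraphAt{L}{u}$ with $\lab[L](w) \in \FS$, the successors of $w$ in $\subgraphAt{L}{u}$ coincide with $\suc[L](w)$ (since they are reachable from $u$ via $w$ and hence lie in the subgraph), and analogously $\suc[\subgraphAt{S}{m(u)}](m(w)) = \suc[S](m(w))$ by the previous paragraph. Thus clauses (i) and (ii) for the restriction reduce verbatim to clauses (i) and (ii) of the original morphism $m \colon L \to T$, which hold by assumption.

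There is no real obstacle here; the lemma is essentially a bookkeeping check that subgraph induced by reachability is preserved by morphisms, and the only subtlety is realising that wherever condition (ii) of Definition~\ref{d:morphismus} is needed, the relevant node already has a successor and therefore carries an $\FS$-label so that (ii) is applicable. I would end by appealing to Lemma~\ref{l:morph:subgraph}'s own statement to wrap up.
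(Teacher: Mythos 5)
Your proof is correct and is exactly the ``straightforward inductive argument'' the paper gestures at: an induction along reachability showing $m$ maps $\subgraphAt{L}{u}$ into $\subgraphAt{S}{m(u)}$, followed by the routine check of the morphism clauses (and you rightly note that $u\in S$ in the statement should read $u\in L$, and that clause (ii) is applicable because any node with successors carries an $\FS$-label). No gaps; only the closing remark about ``appealing to the lemma's own statement'' is superfluous, since the argument is already complete at that point.
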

\begin{proof}
  By a straight forward inductive argument.
\end{proof}
Let $m \colon L \to S$ be a morphism from $L$ to $S$. 
The \emph{induced substitution} $\sigma_m \colon \Var(L) \to \TA$ is defined 
as $\sigma_m(x) \defsym \trepr(\subgraphAt{S}{m(u)})$ for any 
$u \in S$ such that $\Lab(u) = x \in \VS$.
As an easy consequence of Lemma~\ref{l:morph:subgraph} we obtain the following.
\begin{lemma}\label{l:subst:l}
  Let $L$ and $S$ be term graphs, and
  suppose $m \colon L \to S$ for some morphism $m$.
  Let $\sigma_m$ be the substitution induced by $m$.
  Then $\trepr(L)\sigma_m = \trepr(S)$.
\end{lemma}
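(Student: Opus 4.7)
The plan is to prove a strengthened statement by structural induction, which will give Lemma~\ref{l:subst:l} as the special case $u = \grt(L)$. Specifically, I would show that for every node $u \in L$,
\[
\trepr(\subgraphAt{L}{u})\sigma_m = \trepr(\subgraphAt{S}{m(u)}).
\]
Taking $u = \grt(L)$ and using the morphism condition $m(\grt(L)) = \grt(S)$ yields the desired equality $\trepr(L)\sigma_m = \trepr(S)$.

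First, I would verify that $\sigma_m$ is well-defined: Definition~\ref{d:termgraph} forces variable nodes of $L$ to be shared, i.e.\ there is exactly one node carrying each variable label, so the defining clause $\sigma_m(x) \defsym \trepr(\subgraphAt{S}{m(u)})$ (where $\Lab(u) = x$) does not depend on the choice of $u$. Next I would run the induction on the depth of $\subgraphAt{L}{u}$, which is finite because term graphs are acyclic by Definition~\ref{d:termgraph}. In the base case $\Lab(u) = x \in \VS$, we have $\trepr(\subgraphAt{L}{u}) = x$ and thus $\trepr(\subgraphAt{L}{u})\sigma_m = x\sigma_m = \trepr(\subgraphAt{S}{m(u)})$ directly by definition of $\sigma_m$.

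For the step case, let $\Lab(u) = f \in \FS$ with $\suc[L](u) = [u_1, \dots, u_k]$. Clause (i) of Definition~\ref{d:morphismus} gives $\Lab(m(u)) = f$, and clause (ii) gives $\suc[S](m(u)) = [m(u_1), \dots, m(u_k)]$. Lemma~\ref{l:morph:subgraph} supplies morphisms $m \colon \subgraphAt{L}{u_i} \to \subgraphAt{S}{m(u_i)}$, and because variable labelling is injective on $L$, the substitution induced by each restricted morphism agrees with $\sigma_m$ on the variables of $\subgraphAt{L}{u_i}$. Applying the induction hypothesis to each $u_i$ and unfolding the definition of $\trepr$ yields
\[
\trepr(\subgraphAt{L}{u})\sigma_m = f\bigl(\trepr(\subgraphAt{L}{u_1})\sigma_m, \dots, \trepr(\subgraphAt{L}{u_k})\sigma_m\bigr) = f\bigl(\trepr(\subgraphAt{S}{m(u_1)}), \dots, \trepr(\subgraphAt{S}{m(u_k)})\bigr) = \trepr(\subgraphAt{S}{m(u)}),
\]
completing the induction.

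The only mildly delicate point is the bookkeeping in the step case: one has to make sure that the substitution used after applying the induction hypothesis to the children is literally $\sigma_m$ and not some restriction that might differ. This is precisely where the sharedness condition on variable nodes (part of Definition~\ref{d:termgraph}) is essential; once that is invoked, the rest is a routine structural induction enabled by Lemma~\ref{l:morph:subgraph}.
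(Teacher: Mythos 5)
Your proof is correct and follows essentially the same route as the paper: both establish the strengthened claim $\trepr(\subgraphAt{L}{u})\sigma_m = \trepr(\subgraphAt{S}{m(u)})$ for every node $u \in L$ by induction (the paper inducts on the represented term, you on subgraph depth, which is the same well-founded measure here), using Lemma~\ref{l:morph:subgraph} and the two morphism conditions in the step case and the definition of $\sigma_m$ in the variable case. Your extra remarks on well-definedness of $\sigma_m$ and on the restricted morphisms inducing the same substitution are harmless elaborations of points the paper leaves implicit.
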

\begin{proof}
The lemma has also been shown in \cite[Lemma 14]{AM10}. For completeness 
we restate the proof.

We prove that for each node $u \in L$, 
$\trepr(\subgraphAt{L}{u})\sigma_m = \trepr(\subgraphAt{S}{m(u)})$
by induction on $l \defsym \trepr(\subgraphAt{L}{u})$.
To conclude $\trepr(L)\sigma_m = \trepr(S)$,
it suffices to observe that by definition of $m$: $m(\grt(L)) = \grt(S)$. 

If $l \in \VS$, then by definition $\subgraphAt{L}{u}$ consists of a single (variable-)node
and $l\sigma_m = \trepr(\subgraphAt{S}{m(u)})$ follows by the
definition of the induced substitution $\sigma_m$.
If $l = f(l_1,\dots,l_k)$, then we have $\suc[L](u) = [u_1,\dots,u_k]$ for some
$\seq[k]{u} \in L$. 
As $m \colon L \to S$ holds, Lemma~\ref{l:morph:subgraph} yields
$m \colon \subgraphAt{L}{u_i} \to \subgraphAt{S}{m(u_i)}$ for all
$i = 1, \dots, k$. And induction hypothesis becomes
applicable so that $l_i\sigma_m = \trepr(\subgraphAt{S}{m(u_i)})$.
Thus 
\begin{equation*}
  l\sigma_m = f(l_1\sigma_m,\dots,l_k\sigma_m) = 
  f(\trepr(\subgraphAt{S}{m(u_1)}), \dots, \trepr(\subgraphAt{S}{m(u_k)})) \tpkt
\end{equation*}
By definition of $m$, $\lab[S](m(u)) = \lab[L](u) = f$ and 
$\suc[S](m(u)) = m^{*}(\suc[L](u)) = [m(u_1), \dots, m(u_k)]$. Hence
$f(\trepr(\subgraphAt{S}{m(u_1)}), \dots, \trepr(\subgraphAt{S}{m(u_k)})) 
= \trepr(\subgraphAt{S}{m(u)})$.
\end{proof}

We write $S \geqm T$ (or $S \geqslant T$ for short)
if $m \colon S \to T$ is a morphism such that for \emph{all}
$u \in \nodes[S]$, Property (i) and Property (ii) in Definition \ref{d:morphismus} are fulfilled.
For this case, $S$ and $T$ represent the same term.
We write $S \gem T$ (or $S > T$ for short)
when the graph morphism $m$ is additionally \emph{non-injective}.
If both $S \geqm[] T$ and $T \geqm[] S$ holds then $S$ and $T$ are \emph{isomorphic}, 
in notation $S \isomorphic T$.
Recall that $\size{S}$ denotes the number of nodes in $S$.

\begin{lemma}\label{l:morph:termeq}
  For all term graph $S$ and $T$, $S \geqm T$ implies $\trepr(S) = \trepr(T)$ and $\size{S} \geqslant \size{T}$.
  If further $S \gem T$ holds then $\size{S} > \size{T}$.
\end{lemma}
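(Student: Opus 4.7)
The plan is to exploit the extra conditions in the definition of $\geqm$---namely, that label and successor preservation are demanded for \emph{all} nodes, including variable nodes---to reduce the first claim to Lemma~\ref{l:subst:l} and the second to a surjectivity argument for $m$.

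For the term equality, I would observe that by Lemma~\ref{l:subst:l} applied to $m \colon S \to T$ we already have $\trepr(S)\sigma_m = \trepr(T)$, so it remains only to argue that $\sigma_m$ acts as the identity on $\Var(\trepr(S))$. This follows by a one-line check: for any $u \in S$ with $\Lab(u) = x \in \VS$, property~(i) of $\geqm$ applied at the variable node $u$ forces $\Lab(m(u)) = x$, whence by Definition~\ref{d:termgraph} the subgraph $\subgraphAt{T}{m(u)}$ is a lone variable node, giving $\sigma_m(x) = x$.

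For the size bound I would show that $m$ is surjective, from which $\size{T} = |m(\nodes[S])| \leqslant |\nodes[S]| = \size{S}$ is immediate. Surjectivity follows by induction on the length of a shortest path from $\grt(T)$ to $v \in T$: the base case uses $m(\grt(S)) = \grt(T)$, and in the step, given $v' = m(u') \in m(\nodes[S])$ with $v' \suci[T]{i} v$, the morphism condition (ii) at $u'$ yields $v = m(u_i)$ where $u_i$ is the $i$-th successor of $u'$---provided $\Lab(u') \in \FS$. The strict inequality under $S \gem T$ then drops out by combining surjectivity with non-injectivity, since $|m(\nodes[S])| < |\nodes[S]|$.

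The only mildly subtle point, and the place where the extra strength of $\geqm$ over an ordinary morphism is really used, is ensuring $\Lab(u') \in \FS$ in the induction step above. If instead $\Lab(u') \in \VS$, property~(i) of $\geqm$ forces $\Lab(v') \in \VS$, and Definition~\ref{d:termgraph} then makes $\suc[T](v') = []$, contradicting $v' \suci[T]{i} v$. Without this label preservation on variable nodes the surjectivity argument---and hence the size bound---would break, which is precisely why the lemma is phrased in terms of $\geqm$ rather than an arbitrary morphism.
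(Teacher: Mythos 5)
Your proof is correct, but it takes a noticeably different route from the paper's. The paper proves $\trepr(S) = \trepr(T)$ by a direct structural induction on $S$ (in effect re-running the argument of Lemma~\ref{l:subst:l} in the special case where the induced substitution is trivial), and it disposes of the size bound by simply asserting that $m$ is ``by definition surjective''. You instead reuse Lemma~\ref{l:subst:l} to get $\trepr(S)\sigma_m = \trepr(T)$ and add the one-line observation that property~(i) of $\geqm$ at variable nodes forces $\sigma_m(x) = x$, which is a more modular argument; and you prove surjectivity explicitly, by induction on the length of a shortest path from $\grt(T)$, using rootedness of $T$ together with the fact that $\geqm$ imposes (i) and (ii) at \emph{all} nodes of $S$ (your observation that a variable-labelled $u'$ cannot map to a node with successors is exactly the point the paper leaves implicit). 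What your approach buys is a cleaner separation of concerns and a justification of the surjectivity claim that the paper only gestures at; what the paper's direct induction buys is independence from the induced-substitution machinery, since it never needs to reason about $\sigma_m$ at all. The treatment of the strict inequality under $S \gem T$ (non-injectivity of a surjection between finite node sets) is the same in both.
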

\begin{proof}
Suppose $S \geqm T$
We first prove $\trepr(S) = \trepr(T)$. 
We prove the lemma by induction on $S$. 
For the base case, suppose $S$ consists of a single
node $u$ such that $\lab[S](u) = x \in \VS$. As the node $u$ is the
root of $S$, definition of $m$ yields that $T$ consists of a
single (variable)-node labeled with $x$. (Note that $m$ is
a surjective morphism.) Thus the result follows trivially.
For the inductive step, suppose $\lab[S] = f \in \FS$ and
$\suc[S] = [u_1,\dots,u_k]$. From $S \geqm T$ we see that
$\lab[T] = f$ and $\suc[T] = [m(u_1),\dots,m(u_k)]$. 
As a consequence of Lemma~\ref{l:morph:subgraph}, 
$\subgraphAt{S}{u_i} \geqm \subgraphAt{T}{m(u_i)}$ and thus 
by induction hypothesis $\trepr(\subgraphAt{S}{u_i}) = \trepr(\subgraphAt{T}{m(u_i)})$
as desired.

Now observe that $m$ is by definition surjective. Thus 
$\size{S} \geqslant \size{T}$ trivially follows. 
Further, it is not hard to see that if $m$ is additionally non-injective, 
i.e., $S \gem T$, then clearly $\size{S} > \size{T}$.
\end{proof}

Let $L$ and $R$ be two properly sharing term graphs.
Suppose $\grt(L) \not \in \Var(L)$, $\Var(R) \subseteq \Var(L)$ and $\grt(L) \not \in R$. 
Then the graph $L \graphUnion R$ is called a \emph{graph rewrite rule} (\emph{rule} for short), 
denoted by $L \to R$. The graph $L$, $R$ denotes the left-hand, right-hand side of $L \to R$
respectively.
A \emph{graph rewrite system} (\emph{GRS} for short) $\GS$ is a set of graph rewrite rules.

Let $\GS$ be a GRS, let $S \in \GRAPHS$ and let $L \to R$
be a rule. 
A rule $L' \to R'$ is called a \emph{renaming} of $L \to R$ 
with respect to $S$ if $(L' \to R') \isomorphic (L \to R)$ 
and $\nodes[S] \cap \nodes[{L' \to R'}] = \varnothing$.
Let ${L' \to R'}$ be a renaming of a rule $(L \to R) \in \GS$ for $S$, 
and let $u \in S$ be a node.
We say $S$ \emph{rewrites} to $T$ at \emph{redex} $u$ with rule $L \to R$,
denoted as $S \REW[\GS,u, L \to  R] T$, 
if there exists a morphism $m \colon L' \to \subgraphAt{S}{u}$
and $T = \replaceAt{S}{u}{m(R')}$.
Here $m(R')$ denotes the structure obtained by replacing in $R'$
every node $v \in \dom(m)$ by $m(v) \in S$, where 
the labels of $m(v) \in m(R')$ are the labels of $m(v) \in S$.
We also write $S \REW[\GS,p, L \to  R] T$ if 
$S \REW[\GS,u, L \to  R] T$ for the position $p$ corresponding to $u$ in $S$.
We set $S \REW[\GS] T$ if $S \REW[\GS,u, L \to  R] T$ for some $u \in S$ and $(L \to R) \in \GS$.
The relation $\REW[\GS]$ is called the \emph{graph rewrite relation}
induced by $\GS$. 
Again abusing notation, we denote the set of normal-forms with respect to $\REW$ as
$\NF(\GS)$.


\section{Adequacy of Graph Rewriting for Term Rewriting}
\label{Simulation}

In earlier work \cite{AM10} we have shown that graph rewriting is adequate for 
innermost rewriting without further restrictions on the studied TRS $\RS$. 
In this section we generalise this result to full rewriting.
The here presented adequacy theorem (see Theorem~\ref{t:adequacy}) is not essentially new. 
Related results can be found in the extensive literature, 
see for example \cite{TeReSe}. 
In particular, in~\cite{DetlefPlump:01} the adequacy theorem is stated for 
full rewriting and unrestricted TRSs.
In this work, we take a fresh look from a complexity related point of view.
%
We give a new proof of the adequacy of 
graph rewriting for full rewriting that allows for
a precise control of the resources copied. This is essential
for the accurate characterisation of the 
implementation of graph rewriting given in Section~\ref{Complexity}.

\begin{definition}
The \emph{simulating graph rewrite system $\GS(\RS)$} of $\RS$ 
contains for each rule ${(l \to r)} \in \RS$ some rule 
$L \to R$ such that $L \in \Tree(l)$, $R \in \Tree(r)$ and
$\nodes[L] \cap \nodes[R] = \Var(R)$. 
\end{definition}

The next two Lemmas establish soundness in the sense that
derivations with respect to $\GS(\RS)$ correspond to $\RS$-derivations.
\begin{lemma}\label{l:subst:r}
  Let $S$ be a term graph and 
  let $L \to R$ be renaming of a graph rewrite rule for $S$, i.e.,
  $S \cap R = \varnothing$.
  Suppose $m \colon L \to S$ for some morphism $m$ and let 
  $\sigma_m$ be the substitution induced by $m$.
  Then $\trepr(R)\sigma_m = \trepr(T)$ where $T \defsym \subgraphAt{(m(R) \cup S)}{\grt(m(R))}$.
\end{lemma}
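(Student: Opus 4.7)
The plan is to mirror the proof of Lemma~\ref{l:subst:l} with a small twist to accommodate the fact that $m$ is only defined on $L$, not on $R$. Since $\Var(R) \subseteq \Var(L)$ and variable nodes are shared, $m$ is already defined on every variable node of $R$. I would extend $m$ to a function $\hat m$ on all nodes of $R$ by
\[
  \hat m(u) \defsym \begin{cases} m(u) & \text{if $u \in \Var(R)$,} \\ u & \text{otherwise.} \end{cases}
\]
By the description of $m(R)$ following the definition of the graph rewrite relation, $\hat m$ sends each $u \in R$ to the corresponding node in $m(R)$, and $\grt(m(R)) = \hat m(\grt(R))$.

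First I would note that, by the renaming assumption $S \cap R = \varnothing$ combined with the fact that $m(R)$ arises from $R$ by replacing only the variable nodes of $R$ by their images under $m$, we have $m(R) \cap S = m(\Var(R))$. Consequently, for any variable node $u \in R$ the subgraph $\subgraphAt{(m(R) \cup S)}{m(u)}$ coincides with $\subgraphAt{S}{m(u)}$: no function node of $m(R)$ is reachable from a node already in $S$, because the only gluing points between $m(R)$ and $S$ are the shared variable-node images, which are leaves.

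Next I would prove the pointwise claim
\[
  \trepr(\subgraphAt{R}{u})\sigma_m \;=\; \trepr(\subgraphAt{(m(R) \cup S)}{\hat m(u)})
\]
for every $u \in R$, by induction on the term $\trepr(\subgraphAt{R}{u})$. In the base case $\Lab(u) = x \in \VS$, so $\hat m(u) = m(u) \in S$, and by the observation above together with the definition of $\sigma_m$, both sides equal $\trepr(\subgraphAt{S}{m(u)})$. In the step case $\Lab(u) = f \in \FS$ with $\suc[R](u) = [u_1,\dots,u_k]$; by construction of $m(R)$ the node $\hat m(u) = u$ carries label $f$ and has successors $\hat m(u_1),\dots,\hat m(u_k)$ in $m(R) \cup S$, so applying the induction hypothesis to each $u_i$ (using Lemma~\ref{l:morph:subgraph}-style subgraph reasoning to preserve the morphism structure where applicable) and then wrapping with $f$ yields the desired equality. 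Instantiating $u = \grt(R)$ gives exactly $\trepr(R)\sigma_m = \trepr(T)$.

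The main obstacle is being precise about the identity $m(R) \cap S = m(\Var(R))$: one must argue that the quotient-like operation of replacing variable nodes in $R$ by their images does not accidentally identify non-variable nodes of $R$ with nodes of $S$, and that taking the subgraph rooted at $\grt(m(R))$ in $m(R) \cup S$ correctly realises the substituted right-hand side. Once this structural bookkeeping is in place, the induction itself is a routine adaptation of the argument in Lemma~\ref{l:subst:l}.
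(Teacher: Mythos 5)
Your overall strategy is essentially the paper's: extend $m$ to the nodes of $R$, prove $\trepr(\subgraphAt{R}{u})\sigma_m = \trepr(\subgraphAt{(m(R) \cup S)}{\hat m(u)})$ for all $u \in R$ by induction on the represented term, and use the fact that from a node of $S$ no node of the fresh part of $m(R)$ is reachable, so that below images of variable nodes one just sees $\subgraphAt{S}{m(u)}$. There is, however, a gap in generality. Your definition of $\hat m$ (identity on all non-variable nodes of $R$) and the identity $m(R) \cap S = m(\Var(R))$ silently assume that the only nodes shared by $L$ and $R$ are variable nodes, i.e.\ $\nodes[L] \cap \nodes[R] = \Var(R)$. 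This holds for the rules of the simulating system $\GS(\RS)$, but the lemma is stated for an arbitrary (renamed) graph rewrite rule, and the definition of such a rule only requires $L$ and $R$ to be properly sharing with $\Var(R) \subseteq \Var(L)$: they may share non-variable subgraphs. For a shared function node $u \in (R \cap L) \setminus \Var(R)$, the construction of $m(R)$ replaces $u$ by $m(u) \in S$ (every node in $\dom(m) = \nodes[L]$ is replaced), whereas your $\hat m$ maps $u$ to itself --- a node that does not occur in $m(R) \cup S$ at all. Then $\hat m$ no longer sends nodes of $R$ to the corresponding nodes of $m(R)$, the intersection $m(R) \cap S$ is $m(R \cap L)$ rather than $m(\Var(R))$, and your induction step has no node $\hat m(u)$ to evaluate.

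The repair is exactly the paper's case distinction: extend $m$ by the identity only on $R \setminus L$ (so the extension agrees with $m$ on all of $R \cap L$). For $u \in R \cap L$ the claim needs no induction: proper sharing gives $\subgraphAt{R}{u} = \subgraphAt{L}{u}$, and Lemma~\ref{l:morph:subgraph} together with Lemma~\ref{l:subst:l} yields $\trepr(\subgraphAt{R}{u})\sigma_m = \trepr(\subgraphAt{S}{m(u)})$. The induction is then run only for $u \in R \setminus L$, where the variable base case cannot occur since $\Var(R) \subseteq \Var(L)$. With this change your argument coincides with the paper's. One further small imprecision: the reason $\subgraphAt{(m(R) \cup S)}{m(u)} = \subgraphAt{S}{m(u)}$ for variable nodes $u$ is that $S$ is disjoint from the fresh nodes $R \setminus L$ (by the renaming), so no $S$-edge can re-enter $m(R) \setminus S$; it is not that the gluing nodes $m(u)$ are leaves --- in $S$ they are in general roots of large subgraphs.
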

\begin{proof}
We prove the more general statement that 
for each $u \in R$, 
$\trepr(\subgraphAt{R}{u})\sigma_m = \trepr(\subgraphAt{T}{m(u)})$, 
c.f. also \cite[Lemma 15]{AM10}.
First suppose $u \in R \cap L$. Then $\subgraphAt{R}{u} = \subgraphAt{L}{u}$
as $L$, $R$ are properly shared. 
Employing Lemma~\ref{l:subst:l}, we have
$\trepr(\subgraphAt{R}{u})\sigma_m = \trepr(\subgraphAt{L}{u})\sigma_m
= \trepr(\subgraphAt{S}{m(u)})$. From this the assertion follows.

Thus suppose $u \in R \setminus L$. This subcase we prove
by induction on $r \defsym \trepr(\subgraphAt{R}{u})$. The base
case $r \in \VS$ is trivial, as variables are shared in $L \to R$. 
For the inductive step, let $r = f(r_1,\dots,r_k)$
with $\suc[R](u) = [\seq[k]{u}]$. 
We identify $m$ with the extension of $m$ to all nodes in $R$.
The induction hypothesis yields $r_i\sigma_m = \trepr(\subgraphAt{T}{m(u_i)})$
for $i = 1,\dots,k$. By definition of $m(R)$: $m(u) = u \in m(R) \subseteq T$.
Hence $\suc[T](m(u)) = \suc[m(R)](u) = 
{m}^{\ast}(\suc[R](u)) = [m(u_1), \dots,m(u_k)]$.
Moreover $\lab[T](u) = \lab[m(R)](u) = f$ by definition.
We conclude 
$r\sigma_m =  f(r_1\sigma_m,\dots,r_k\sigma_m) =
f(\trepr(\subgraphAt{T}{m(u_1)}),\dots,\trepr(\subgraphAt{T}{m(u_k)})) 
= \trepr(\subgraphAt{T}{m(u)})$.
\end{proof}

In Section~\ref{Preliminaries} we introduced $\hole$ as designation of the
empty context. Below we write $\hole$ for the unique 
(up-to isomorphism) graph representing the constant $\hole$.
\begin{lemma}\label{l:context}
Let $S$ and $T$ be two be properly sharing term graphs, let $u \in S \setminus T$
and $C = \trepr(\replaceAt{S}{u}{\hole})$.
Then $\trepr(\replaceAt{S}{u}{T}) = C[\trepr(T),\dots,\trepr(T)]$.
\end{lemma}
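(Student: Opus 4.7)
The plan is to prove the statement by induction on the term $\trepr(S)$, which is well-founded since $S$ is acyclic, with a case distinction on whether $u$ coincides with the root $\grt(S)$ of $S$. Throughout, I abbreviate $S' \defsym \replaceAt{S}{u}{T}$ and $S'' \defsym \replaceAt{S}{u}{\hole}$, so that by definition $C = \trepr(S'')$.

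For the base case $u = \grt(S)$, the definition of $\replaceAt{S}{u}{H}$ sets the root to $\grt(H)$, so $S'' = \hole$ (the one-node graph) and $S' = T$. Hence $C = \hole$ is a single hole and the asserted equation $\trepr(S') = C[\trepr(T)]$ collapses to $\trepr(T) = \trepr(T)$.

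For the inductive step $u \neq \grt(S)$, I would unfold the root: writing $\lab[S](\grt(S)) = f$ and $\suc[S](\grt(S)) = [w_1,\ldots,w_k]$, the redirection $\replaceNode{S}{u}{\grt(H)}$ keeps $\grt(S)$ intact (because $\grt(S) \neq u$), so both $S'$ and $S''$ have root $\grt(S)$ labelled $f$ with $k$ successors. For each index $i$, I would examine the $i$-th successor separately. If $w_i = u$, the redirection replaces it by $\grt(\hole) = \hole$ in $S''$ and by $\grt(T)$ in $S'$, contributing the $i$-th argument $\hole$ to $C$ and $\trepr(T)$ to $\trepr(S')$. If $w_i \neq u$, I would establish $\subgraphAt{S''}{w_i} = \replaceAt{\subgraphAt{S}{w_i}}{u}{\hole}$ and $\subgraphAt{S'}{w_i} = \replaceAt{\subgraphAt{S}{w_i}}{u}{T}$ by unfolding the definitions of $\replaceNode$ and $\graphUnion$ (this uses proper sharing, and is trivial when $u \notin \subgraphAt{S}{w_i}$). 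Since $\trepr(\subgraphAt{S}{w_i})$ is a proper subterm of $\trepr(S)$, the induction hypothesis yields $\trepr(\subgraphAt{S'}{w_i}) = C_i[\trepr(T),\ldots,\trepr(T)]$ with $C_i = \trepr(\subgraphAt{S''}{w_i})$. Assembling the $k$ arguments gives $\trepr(S') = f(\ldots) = C[\trepr(T),\ldots,\trepr(T)]$.

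The main obstacle is the bookkeeping identity $\subgraphAt{(\replaceAt{S}{u}{H})}{w_i} = \replaceAt{\subgraphAt{S}{w_i}}{u}{H}$ for $w_i \neq u$, which must be checked by unfolding redirection plus graph union and verifying that taking the subgraph at $w_i$ commutes with the replacement: nodes unreachable from $w_i$ are irrelevant in either order, and the replacement graph $H$ is attached precisely at the positions below $w_i$ where $u$ occurred. Proper sharing between $S$ and $H$ is used throughout to guarantee that each $\graphUnion$ is well defined.
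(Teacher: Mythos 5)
Your proposal is correct and follows essentially the same route as the paper: induction on $S$ (the paper uses $\size{S}$, you use the structure of $\trepr(S)$, which amounts to the same well-founded induction), a trivial case when $u$ is the root, and otherwise unfolding the root symbol and treating each successor $v_i$ by the case split $v_i = u$ versus $v_i \neq u$ together with the commuting identity $\subgraphAt{(\replaceAt{S}{u}{H})}{v_i} = \replaceAt{(\subgraphAt{S}{v_i})}{u}{H}$ and the induction hypothesis. The only difference is presentational: you make that bookkeeping identity explicit, whereas the paper builds it into the statement of its sub-case.
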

\begin{proof}
We proceed by induction on the size of $S$, compare also \cite[Lemma 16]{AM10}.
In the
base case $S$ consists of a single node $u$. Hence the context
$C$ is empty and the lemma follows trivially. 

For the induction step we can assume without loss of
generality that $u \not= \grt(S)$.  
We assume $\lab[S](\grt(S)) = f \in \FS$ and 
$\suc[S](\grt(S)) = [v_1,\dots,v_k]$. 
For all $i$ ($1 \leqslant i \leqslant k$) such that
$v_i = u$ set $C_i = \hole$ and for all
$i$ such that $v_i \not= u$ but 
$\subgraphAt{(\replaceAt{S}{u}{T})}{v_i} = \replaceAt{(\subgraphAt{S}{v_i})}{u}{T}$
we set $C_i = \trepr(\replaceAt{(\subgraphAt{S}{v_i})}{u}{\hole})$.
In the latter sub-case induction hypothesis is applicable
to conclude
\begin{equation*}
  \trepr\bigl(\subgraphAt{\bigl(\replaceAt{S}{u}{T}\bigr)}{v_i} \bigr) =
  C_i[\trepr(T), \dots, \trepr(T)] \tpkt
\end{equation*}
Finally we set $C \defsym f(\seq[k]{C})$ and obtain $C = \trepr(\replaceAt{S}{u}{\hole})$.
In sum we have
$ \trepr(\replaceAt{S}{u}{T}) = f\bigl( \subgraphAt{\bigl(\replaceAt{S}{u}{T}\bigr)}{v_1}, \dots, 
\subgraphAt{\bigl(\replaceAt{S}{u}{T}\bigr)}{v_k}\bigr) = C[\overline{\trepr(T)}]$, 
where $\overline{\trepr(T)}$ denotes the sequences of terms $\trepr(T)$. 
\end{proof}

For non-left-linear TRSs $\RS$, $\REW[\GS(\RS)]$ does 
not suffice to mimic $\rew[\RS]$. This is clarified in the following example.
\begin{example}\label{ex:problem:1}
  Consider the TRS $\RSf \defsym \set{\m{f}(x) \to \m{eq}(x,\m{a});~\m{eq}(x,x) \to \top}$.
  Then $\RSf$ admits the derivation
  $$
  \m{f}(\m{a}) \rew[\RSf] \m{eq}(\m{a},\m{a}) \rew[\RSf] \top
  $$
  but $\GS(\RSf)$ cannot completely simulate the above sequence:
  \begin{center}
  \begin{tikzpicture}[node distance=7mm]
    \node (A)                             {$\m{f}$};
    \node (A1) [below of=A]               {$\m{a}$};
    \node (B) at (A) [xshift=2.2cm]       {$\m{eq}$};
    \node (B1) [below of=B, xshift=-3mm]  {$\m{a}$};
    \node (B2) [below of=B, xshift=3mm]   {$\m{a}$};

    \draw (A) -- (A1);
    \draw (B) -- (B1);
    \draw (B) -- (B2);
    \node at (A) [xshift=10mm, yshift=-3.5mm] {${\REW[\GS(\RSf)]}$};
    \node at (B) [xshift=16mm, yshift=-3.5mm] {${\in \NF(\GS(\RSf))}$};
  \end{tikzpicture}
\end{center}
 Let $L \to R$ be the rule in $\GS(\RSf)$ corresponding to $\m{eq}(x,x) \to \top$, 
 and let $S$, $\trepr(S) = \m{eq}(\m{a},\m{a})$, be the second graph in the above sequence.
 Then $L \to R$ is inapplicable as we cannot simultaneously map
 the unique variable node in $L$ to both leaves in $S$ via a graph morphism.
 Note that the situation can be repaired by sharing the two arguments in $S$.
\end{example}
For maximally sharing graphs $S$
we can prove that redexes of $\RS$ and (positions corresponding to) 
redexes of $\SGRS(\RS)$ coincide. This is a consequence of the following Lemma.
\begin{lemma}\label{l:match:lhs}
  Let $l$ be a term and $s = l\sigma$ for some substitution $\sigma$.
  If $L \in \Tree(l)$ and $S \in \Shared(s)$, then there exists a morphism 
  $m \colon L \to S$. Further, $\sigma(x) = \sigma_m(x)$ 
  for the induces substitution $\sigma_m$ and all variables $x \in \Var(l)$.
\end{lemma}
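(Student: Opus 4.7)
The plan is to define the morphism $m$ by exploiting that $L$ is minimally sharing (so only variable nodes may carry more than one position) and $S$ is maximally sharing (so the assignment ``node $\mapsto$ represented term'' is injective). For each $u \in L$, pick any position $p \in \Pos[L](u)$ and set $m(u)$ to be the unique node $v \in S$ with $\trepr(\subgraphAt{S}{v}) = \subtermAt{s}{p}$; uniqueness comes from $S \in \Shared(s)$.

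The first thing I would check is well-definedness. If $u$ is a non-variable node then, because $L$ is minimally sharing, $u$ is unshared and $p$ is uniquely determined. If $u$ is a variable node with $\Lab(u) = x$, then every position $p$ of $u$ in $L$ satisfies $\subtermAt{l}{p} = x$, hence $\subtermAt{s}{p} = \sigma(x)$ independently of $p$. So $m(u)$ is well-defined in either case. Note also that $\grt(L)$ corresponds to $\varepsilon$, and the unique node $v \in S$ with $\trepr(\subgraphAt{S}{v}) = s$ is $\grt(S)$; thus $m(\grt(L)) = \grt(S)$.

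Next I would verify the two morphism conditions for an arbitrary node $u \in L$ with $\Lab[L](u) = f \in \FS$ and $\suc[L](u) = [u_1,\dots,u_k]$. Let $p$ be the unique position of $u$. Then $\subtermAt{l}{p} = f(\trepr(\subgraphAt{L}{u_1}),\dots,\trepr(\subgraphAt{L}{u_k}))$, so $\subtermAt{s}{p}$ has root symbol $f$ with $i$-th argument $\trepr(\subgraphAt{L}{u_i})\sigma = \subtermAt{s}{p \cdot i}$. Since $S \in \Shared(s)$, $m(u)$ is labelled $f$ and its $i$-th successor is the unique node of $S$ representing $\subtermAt{s}{p \cdot i}$, which is precisely $m(u_i)$. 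Hence $\Lab[S](m(u)) = \Lab[L](u)$ and $m^\ast(\suc[L](u)) = \suc[S](m(u))$.

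Finally, for the induced substitution: given $x \in \Var(l)$, there is a (unique) variable node $u \in L$ with $\Lab(u) = x$ (variable nodes are shared in $L$). By construction $m(u)$ is the node of $S$ representing $\sigma(x)$, so $\sigma_m(x) = \trepr(\subgraphAt{S}{m(u)}) = \sigma(x)$. The only place that requires care is the well-definedness step for variable nodes that carry several positions; once that is settled, everything reduces to the one-to-one correspondence between subterms of $s$ and nodes of $S$ afforded by maximal sharing, so I do not expect any serious obstacle.
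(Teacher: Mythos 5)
Your proof is correct, but it takes a different route from the paper. The paper proceeds by induction on $l$: it obtains morphisms $m_i \colon \subgraphAt{L}{u_i} \to \subgraphAt{S}{v_i}$ for the immediate subterms by the induction hypothesis, glues them together, and uses maximal sharing of $S$ exactly once, to show that the $m_i$ agree on overlapping domains (which, by minimal sharing of $L$, can only be variable nodes); the identity $\sigma_m(x)=\sigma(x)$ comes out of the same inductive argument. You instead define $m$ globally and explicitly, sending a node $u \in L$ with position $p$ to the unique node of $S$ representing $\subtermAt{s}{p}$, and then verify the morphism conditions directly; here maximal sharing enters through the uniqueness of that node, and minimal sharing of $L$ through the fact that only variable nodes have several positions (for which the represented term $\sigma(x)$ is position-independent, so the definition is unambiguous). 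Both arguments hinge on the same two structural facts, but yours avoids the compatibility check between sub-morphisms at the price of having to argue, when verifying $m^{\ast}(\suc[L](u)) = \suc[S](m(u))$, that the successor $m(u_i)$ is indeed the node representing $\subtermAt{s}{p\cdot i}$ even when $u_i$ is a shared variable node --- a point you do cover via your well-definedness step, though you should also note explicitly that a node of $S$ representing $\subtermAt{s}{p}$ \emph{exists}, which is immediate since $\Pos[S]$ coincides with the set of positions of $s$ and every position of $l$ is a position of $s=l\sigma$. The paper's inductive formulation has the mild advantage of directly reusing the machinery of the induced substitution (Lemma~\ref{l:subst:l}) and matching the style of the surrounding lemmas, while your explicit description of $m$ makes the correspondence between nodes of $S$ and subterms of $s$ more transparent.
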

\begin{proof}
We prove the lemma by induction on $l$. It suffices
to consider the induction step. Let 
$l = f(\seq[k]{l})$ and $s = f(l_1\sigma,\dots,l_k\sigma)$. 
Suppose $\suc[L](\grt(L)) = [\seq[k]{u}]$ and $\suc[S](\grt(S)) = [\seq[k]{v}]$.
By induction hypothesis there exist morphisms 
$m_i \colon \subgraphAt{L}{u_i} \to \subgraphAt{S}{v_i}$ ($1 \leqslant i \leqslant k$) 
of the required form.
Define $m \colon \nodes[L] \to \nodes[S]$ as follows.
Set $m(\grt(L)) = \grt(S)$ and for 
$w \not = \grt(L)$ define $m(w) = m_i(w)$ if $w \in \dom(m_i)$.
We claim $w \in (\dom(m_i) \cap \dom(m_j))$ implies $m_i(w) = m_j(w)$.
For this, suppose $w \in (\dom(m_i) \cap \dom(m_j))$.
Since $L \in \Tree(l)$, only variable nodes are shared, hence
$w$ needs to be a variable node, say $\lab[L](w) = x \in \VS$.
Then 
$$\trepr(\subgraphAt{S}{m_i(w)}) = \sigma_{m_i}(x) = \sigma(x) = \sigma_{m_j}(x) = \trepr(\subgraphAt{S}{m_j(w)})$$
by induction hypothesis.
As $S \in \Shared(s)$ is maximally shared, $m_i(w) = m_j(w)$ follows.
We conclude $m$ is a well-defined morphism, further $m \colon L \to S$.
\end{proof}

A second problem is introduced by non-eager evaluation. Consider the following.
\begin{example}\label{ex:problem:2}
  Let $\RSg \defsym \set{\m{dup}(x) \to \m{c}(x,x);~\m{a} \to \m{b}}$.
  Then $\RSg$ admits the derivation
  $$
  \m{dup}(\m{a}) \rew[\RSg] \m{c}(\m{a},\m{a}) \rew[\RSg] \m{c}(\m{b},\m{a})
  $$
  but applying the corresponding rules in $\GS(\RSg)$ yields:
  \begin{center}
    \begin{tikzpicture}[node distance=7mm]
      \node (A)                             {$\m{dup}$};
      \node (A1) [below of=A]               {$\m{a}$};
      \node (B) at (A) [xshift=2.4cm]       {$\m{c}$};
      \node (B1) [below of=B]  {$\m{a}$};
      \node (C) at (B) [xshift=2.4cm]       {$\m{c}$};
      \node (C1) [below of=C]  {$\m{b}$};

      \draw (A) -- (A1);
      \path (B) edge [bend left] (B1);
      \path (B) edge [bend right] (B1);
      \path (C) edge [bend left] (C1);
      \path (C) edge [bend right] (C1);

      \node at (A) [xshift=12mm, yshift=-3.5mm] {${\REW[\GS(\RSg)]}$};
      \node at (B) [xshift=12mm, yshift=-3.5mm] {${\REW[\GS(\RSg)]}$};
    \end{tikzpicture}
  \end{center}
  Application of the first rule produces a shared redex. 
  Consequently the second step amounts to a parallel step in $\RSg$. 
\end{example}

To prove adequacy of graph rewriting for term rewriting and unrestricted TRSs, 
we follow the standard approach \cite{TeReSe,DetlefPlump:01}
where \emph{folding} (also called \emph{collapsing}) and \emph{unfolding} (also referred to as \emph{copying}) 
is directly incorporated in the graph rewrite relation.
Unlike in the cited literature, we 
employ a very restrictive form of folding and unfolding.
To this extend, we define for positions $p$ 
relations $\flds{p}$ and $\ufldp{p}$ on term graphs. 
Both relations preserve term structure.
However, when $S \flds{p} T$ holds then the
subgraph $\subgraphAt{T}{p}$ admits strictly more sharing than $\subgraphAt{S}{p}$.
Conversely, when $S \ufldp{p} T$ holds, 
nodes above $p$ in $T$ admit less sharing than nodes above $p$ in $S$.
Extending the graph rewrite relation $\REW[\GS(\RS),p]$ by $\flds{p}$
and $\ufldp{p}$ 
addresses both problems highlighted in Example \ref{ex:problem:1}
and Example \ref{ex:problem:2}.

The relations $\flds{p}$ and $\ufldp{p}$ are based 
on \emph{single step} approximations $\flda{u}{v}$ of $\gem$.
\begin{definition}\label{d:flda}
Let $\succ$ denote some total order on nodes that is irreflexive and transitive, let $\succcurlyeq$ 
denote the reflexive closure of $\succ$.
Let $S$ be term graphs, and let $u, v \in S$ be nodes satisfying $u \succcurlyeq v$.
We define $S \fldaeq{u}{v} T$ for term graph $T$ if 
$S \geqm T$ for the  morphism $m$ 
identifying $u$ and $v$, more precisely,
$m(u) = v$ and $m(w) = w$ for all $w \in S \setminus \set{u}$.
We define $S \flda{u}{v} T$ if $S \fldaeq{u}{v} T$ and $u \not = v$. 
\end{definition}  

We write $S \fldaeq{}{v} T$ ($S \flda{}{v} T$) if there exists $u \in S$ such that 
$S \fldaeq{u}{v} T$ ($S \flda{u}{v} T$) holds.
Similar $S \fldaeq{}{} T$ ($S \flda{}{} T$) if there exist nodes $u,v \in S$ such that 
$S \fldaeq{u}{v} T$ ($S \flda{u}{v} T$) holds.

\begin{example}\label{ex:flda}
  \newcommand{\fldtwothree}{\uflda{\text{\tiny{2}}}{\text{\tiny{3}}}}
  \newcommand{\fldfourfive}{\flda{\text{\tiny{4}}}{\text{\tiny{5}}}}
  \tikzstyle{nid}=[xshift=3mm,yshift=-1mm]
  \newcommand{\tnode}[4][]{%
    \node[#1] (#2) {#3};
    \node[nid] at (#2) {\tiny{\textbf{#4}}};
  }
  Consider the term $t = (\mZ \mPlus \mZ) \mTimes (\mZ \mPlus \mZ)$. Then
  $t$ is represented by the following three graphs that are related by 
  $\fldtwothree$ and $\fldfourfive$ respectively.
  \begin{center}
   \begin{tikzpicture}%
     [node distance=8mm %
     , bg/.style ={fill=black!3,draw=black,minimum width=2.cm}]

     \begin{scope}[xshift=-3.4cm]
       \tnode[]{A1}{$\cOne$}{$\mTimes$}
       \tnode[below of=A1]{A2}{$\cThree$}{$\mPlus$}
       \tnode[below of=A2, xshift=-5mm]{A3}{$\cFour$}{$\mZ$}
       \tnode[below of=A2, xshift=5mm]{A4}{$\cFive$}{$\mZ$}

       \path (A1) edge [bend left] (A2);
       \path (A1) edge [bend right] (A2);
       \path (A2) edge [] (A3);
       \path (A2) edge [] (A4);
       \node[] (L) at (0,-2.3)    {$T_{1}$};
       \begin{pgfonlayer}{background}
         \node [bg, fit=(A1) (A2) (A3) (A4) ] {};
       \end{pgfonlayer}
     \end{scope}

     \node at (-1.7,-0.8) {$\fldtwothree$};

     \begin{scope}
       \tnode[]{A1}{$\cOne$}{$\mTimes$}
       \tnode[below of=A1, xshift=-5mm]{A2}{$\cTwo$}{$\mPlus$}
       \tnode[below of=A1, xshift=5mm]{A3}{$\cThree$}{$\mPlus$}
       \tnode[below of=A2]{A4}{$\cFour$}{$\mZ$}
       \tnode[below of=A3]{A5}{$\cFive$}{$\mZ$}

       \path (A1) edge (A2);
       \path (A1) edge (A3);
       \path (A2) edge (A4);
       \path (A2) edge (A5);
       \path (A3) edge (A4);
       \path (A3) edge (A5);
       \node[] (L) at (0,-2.3)    {$T_{2}$};
       \begin{pgfonlayer}{background}
         \node [bg, fit=(A1) (A2) (A3) (A4) (A5)] {};
       \end{pgfonlayer}
     \end{scope}

     \node at (1.7,-0.8) {$\fldfourfive$};

     \begin{scope}[xshift=3.4cm]
       \tnode[]{A1}{$\cOne$}{$\mTimes$}
       \tnode[below of=A1, xshift=-5mm]{A2}{$\cTwo$}{$\mPlus$}
       \tnode[below of=A1, xshift=5mm]{A3}{$\cThree$}{$\mPlus$}
       \tnode[below of=A3, xshift=-5mm]{A4}{$\cFive$}{$\mZ$}

       \path (A1) edge (A2);
       \path (A1) edge (A3);
       \path (A2) edge [bend left] (A4);
       \path (A2) edge [bend right] (A4);
       \path (A3) edge [bend left] (A4);
       \path (A3) edge [bend right] (A4);
       \node[] (L) at (0,-2.3)    {$T_{3}$};
       \begin{pgfonlayer}{background}
         \node [bg, fit=(A1) (A2) (A3) (A4)] {};
       \end{pgfonlayer}
     \end{scope}
   \end{tikzpicture}
 \end{center} 
 Put otherwise, the term graph $T_2$ is obtained from $T_1$ by copying node $3$, 
 introducing the fresh node $2$. 
 The graph $T_3$ is obtained from $T_2$ by collapsing node $4$ onto node $5$.
\end{example}

Suppose $S \flda{u}{v} T$. Then the morphism
underlying $\flda{u}{v}$ defines the identity on $\nodes[S] \setminus \set{u}$. 
In particular, it defines the identity on successors of $u,v \in S$. 
Thus the following is immediate. 
\begin{lemma}\label{l:flda}
  Let $S$ be a term graph, and let $u,v \in S$ be two distinct nodes. 
  Then there exists a term graph $T$ such that $S \flda{u}{v} T$ 
  if and only if $\lab[S](u) = \lab[S](v)$ and $\suc[S](u) = \suc[S](v)$.
\end{lemma}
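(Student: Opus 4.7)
The plan is to prove each direction of the biconditional. The forward direction is a direct consequence of the defining properties of the morphism $m$ at the nodes $u$ and $v$, while the backward direction requires constructing $T$ explicitly and verifying it is a valid term graph.

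For $(\Rightarrow)$, suppose $S \flda{u}{v} T$. Then the morphism $m$, which maps $u$ to $v$ and acts as the identity elsewhere, satisfies properties (i) and (ii) of Definition~\ref{d:morphismus} at every node of $S$ (by definition of $\geqm$). Applying (i) at $u$ and at $v$ gives $\lab[S](u) = \lab[T](v) = \lab[S](v)$, and applying (ii) yields $m^{\ast}(\suc[S](u)) = \suc[T](v) = m^{\ast}(\suc[S](v))$. To peel off $m^{\ast}$ I observe that $u$ appears in neither successor list in $S$: $u \notin \suc[S](u)$ holds by acyclicity of $S$, and $u \notin \suc[S](v)$ because otherwise $v \in \suc[T](v)$ would contradict acyclicity of $T$. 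Since $m$ is the identity off $u$, $m^{\ast}$ leaves both lists unchanged, and $\suc[S](u) = \suc[S](v)$ follows.

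For $(\Leftarrow)$, I construct $T$ explicitly: put $\nodes[T] \defsym \nodes[S] \setminus \set{u}$, inherit labels from $S$, and define $\suc[T](w) \defsym m^{\ast}(\suc[S](w))$ where $m(u) \defsym v$ and $m(w) \defsym w$ for $w \neq u$; take $\grt(T) \defsym v$ if $\grt(S) = u$ and $\grt(T) \defsym \grt(S)$ otherwise. The morphism conditions for $m \colon S \to T$ hold at every $w \neq u$ by construction, while at $u$ they reduce precisely to the two hypotheses $\lab[S](u) = \lab[S](v)$ and $\suc[S](u) = \suc[S](v)$. It remains to confirm $T$ is a term graph. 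Arities match since labels are inherited, and $u$ cannot carry a variable label — otherwise $\lab[S](v) = \lab[S](u) \in \VS$ with $u \neq v$ would violate variable sharing in $S$ — so variable sharing is preserved in $T$. Rootedness is immediate from the definition of $\grt(T)$.

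The main obstacle lies in establishing acyclicity of $T$. The key observation is that $\suc[S](u) = \suc[S](v)$ already forces $u$ and $v$ to be $\reach[S]$-incomparable: if $v \reachtr[S] u$ along a path $v \reach[S] w_{1} \reachtr[S] u$, then $w_{1} \in \suc[S](v) = \suc[S](u)$ gives $u \reach[S] w_{1} \reachtr[S] u$, contradicting acyclicity of $S$; the case $u \reachtr[S] v$ is symmetric. Consequently any hypothetical cycle in $T$ must pass through $v$, and pulling such a cycle back along $m$ would yield a walk in $S$ witnessing either a genuine cycle in $S$ or one of $v \reachtr[S] u$ and $u \reachtr[S] v$ — all of which are excluded. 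Hence $T$ is a term graph and $S \flda{u}{v} T$.
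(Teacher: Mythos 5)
Your proof is correct and follows essentially the same route as the paper: for the left-to-right direction you use the morphism conditions at $u$ and $v$ together with acyclicity of $S$ and of $T$ to rule out $u$ occurring in either successor list, which is exactly the paper's argument. The right-to-left direction, which you work out in detail (explicit construction of $T$ plus the acyclicity check), is simply dismissed as trivial in the paper, so your extra work fills in rather than diverges from its proof.
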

\begin{proof}
We prove the direction from left to right as the other is trivial.
Suppose $S \flda{u}{v} T$ and 
let $m$ be the morphism underlying $\flda{u}{v}$. 
Observe $m(u) = m(v) = v$. 
And thus
$\lab[S](u) = \lab[T](m(u)) = \lab[T](v) = \lab[T](m(v)) = \lab[S](v)$ follows.
To prove the second assertion, pick nodes $u_i$ and $v_i$ such that
$u \suci[S]{i} u_i$ and $v \suci[S]{i} v_i$.
Since $S \geqm T$ we obtain $m(u_i) = m(v_i)$. 
Thus $u_i \not = v_i$ if and only if either $u_i = u$ or $v_i = u$.
The former implies that $S$ is cyclic, the latter implies that $T$ is cyclic, 
contradicting that $S$ is a term graph or $m$ a morphism. We conclude 
$u_i = v_i$ for $i$ arbitrary, hence $\suc[S](u) = \suc[S](v)$.
\end{proof}

The restriction $u \succcurlyeq v$ was put onto $\fldaeq{u}{v}$ so that $\fldaeq{}{v}$ 
enjoys the following diamond property. Otherwise, 
the peak ${\uflda{u}{v} \cdot \flda{v}{u}} \subseteq {\isomorphic}$ 
cannot be joined.

\begin{lemma}\label{l:fldaeq:diamond}
  $
   {\ufldaeq{}{u} \cdot \fldaeq{}{v}}~\subseteq~{\fldaeq{}{w_1} \cdot \ufldaeq{}{w_2}}
  $
  where $w_1,w_2 \in \set{u,v}$.
\end{lemma}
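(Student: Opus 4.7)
The plan is to take $S_1 \ufldaeq{}{u} S_2 \fldaeq{}{v} S_3$ and close the cospan by constructing a common $T$ that both $S_1$ and $S_3$ reach in a single $\fldaeq{}{}$-step whose target lies in $\set{u,v}$. Unfolding the definitions, there are nodes $u', v' \in S_2$ with $u' \succcurlyeq u$ and $v' \succcurlyeq v$, and morphisms $m_1 \colon S_2 \to S_1$, $m_2 \colon S_2 \to S_3$ collapsing $u' \mapsto u$, respectively $v' \mapsto v$, and the identity elsewhere. Lemma~\ref{l:flda} supplies the equalities $\lab[S_2](u') = \lab[S_2](u)$, $\suc[S_2](u') = \suc[S_2](u)$, and analogously for $(v', v)$. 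The overall strategy is to realise both identifications simultaneously inside a single $T$ and then read off the required secondary folds from $S_1$ and $S_3$ by case analysis on how the two pairs interact.

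The easy case is $\set{u', u} \cap \set{v', v} = \varnothing$. Then $v'$ and $v$ both persist in $S_1$ with unchanged labels, and their successor lists in $S_1$ arise from $\suc[S_2](v')$ and $\suc[S_2](v)$ by the same list-transformation (each occurrence of $u'$ replaced by $u$), so they remain equal. Lemma~\ref{l:flda} then yields $S_1 \fldaeq{v'}{v} T$; symmetrically $S_3 \fldaeq{u'}{u} T$, and the two $T$'s coincide because both arise from $S_2$ via the same composite morphism. We take $w_1 = v$ and $w_2 = u$.

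For the overlapping case I would first dispose of the reflexive subcases ($u' = u$ or $v' = v$), where one fold is the identity and the claim is immediate, and then assume $u' \neq u$, $v' \neq v$, with $\set{u', u} \cap \set{v', v} \neq \varnothing$. Using totality of $\succ$, arrange $u \succcurlyeq v$. In each resulting subcase ($u' = v'$, $u = v$, $u' = v$, or $v' = u$), I would set $w_1 = w_2$ to be the $\succ$-minimum of the equivalence class on $\set{u, u', v, v'}$ generated by $u' \sim u$ and $v' \sim v$; a quick check shows this minimum always lies in $\set{u, v}$. The required secondary folds from $S_1$ and $S_3$ then identify the remaining class members with this minimum, and applicability is verified via Lemma~\ref{l:flda}: labels are constant on the class, and the successor lists agree because the first redirection does not disturb them — here acyclicity of $S_2$ is essential, since it rules out $u' \in \suc[S_2](u)$ and $v' \in \suc[S_2](v)$, either of which would create a cycle through the identities $\suc(u) = \suc(u')$, resp.\ $\suc(v) = \suc(v')$.

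The main obstacle is the bookkeeping in the overlap sub-cases: $w_1$ and $w_2$ must be chosen so as to simultaneously lie in $\set{u,v}$ and respect the order constraint $\succcurlyeq$ built into $\fldaeq{}{}$ by Definition~\ref{d:flda}. Picking the $\succ$-minimum of the class dispatches both requirements uniformly and reduces every subcase to the same pattern of acyclicity-driven successor-list analysis.
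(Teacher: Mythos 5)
Your proposal is correct and takes essentially the same route as the paper's proof: a case analysis on how the pairs $\{u',u\}$ and $\{v',v\}$ overlap (disjoint, same source, same target, source-equals-target), with applicability of the joining folds verified via Lemma~\ref{l:flda} (equal labels and successor lists, the order constraint $\succcurlyeq$, and an acyclicity argument), and the two candidate joins identified by checking that the composite morphisms agree. Your uniform choice of the $\succ$-minimum of the class generated by $u' \sim u$ and $v' \sim v$ is merely a compact packaging of the paper's explicit case-by-case witnesses, which turn out to be the same nodes.
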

\begin{proof}
Assume $T_1 \ufldaeq{u'}{u} S \fldaeq{v'}{v} T_2$ for some term graphs $S$, $T_1$ and $T_2$. 
The only non-trivial case is $T_1 \uflda{u'}{u} S \flda{v'}{v} T_2$ for $u' \not = v'$ and $u \not = v$.
We prove $T_1 \flda{}{w_1} \cdot \uflda{}{w_2} T_2$ for $w_1,w_2 \in \set{u,v}$ by case analysis.
\begin{itemize}
\item \icase{$T_1 \uflda{u'}{w} S \flda{v'}{w} T_2$ for $v' \not = u'$}
  We claim $T_1 \flda{v'}{w} \cdot \uflda{u'}{w} T_2$.  
  Let $m_1$ be the morphism underlying $\uflda{u'}{w}$ 
  and let $m_2$ be the morphism underlying $\flda{v'}{w}$ (c.f. Definition \ref{d:flda}).
  We first show
  $\lab[T_1](v') = \lab[T_1](w)$ and $\suc[T_1](v') = \suc[T_1](w)$.
  Using Lemma~\ref{l:flda},  $S \flda{v'}{w} T_2$ yields $\lab[S](v') = \lab[S](w)$.
  Employing $v' \not=u'$ and $w \not= u'$ we see 
  $$
  \lab[T_1](v') = \lab[T_1](m_1(v')) = \lab[S](v') = 
  \lab[S](w) = \lab[T_1](m_1(w)) = \lab[T_1](w)
  $$
  where we employ $m_1(v') = v'$ and $m_1(w) = w$.
  Again by Lemma~\ref{l:flda}, we see
  $\suc[S](u') = \suc[S](w)$ and $\suc[S](v') = \suc[S](w)$ by the assumption
  $T_1 \uflda{u'}{w} S \flda{v'}{w} T_2$.
  We conclude $\suc[S](v') = \suc[S](w)$ and thus
  $$
  \begin{array}{l}
    \suc[T_1](v') = \suc[T_1](m_1(v')) = m_1^*(\suc[S](v')) \\
    \hspace{45mm} = m_1^*(\suc[S](w)) = \suc[T_1](m_1(w)) = \suc[T_1](w) \tpkt
  \end{array}
  $$
  By Lemma~\ref{l:flda} we see $T_1 \flda{v'}{w} U_1$ for some term graph $U_1$.
  Symmetrically, we can prove $T_2 \flda{u'}{w} U_2$ for some term graph $U_2$.
  Hence $T_1 \flda{v'}{w} \cdot \uflda{u'}{w} T_2$ holds if $U_1 = U_2$. 
  To prove the latter, one shows $m_2 \cdot m_1 = m_1 \cdot m_2$
  by a straight forward case analysis. 
\item \icase{$T_1 \uflda{w}{u} S \flda{w}{v} T_2$ for $u \not = v$}
  Without loss of generality assume $u \succ v$. 
  We claim $T_1 \flda{u}{v} \cdot \uflda{u}{v} T_2$ and follow the pattern 
  of the proof for the previous case.
  Note that $\lab[T_1](u) = \lab[T_1](v)$ follows
  from $\lab[S](u) = \lab[S](w) = \lab[S](v)$ as before, 
  similar $\suc[T_1](u) = \suc[T_1](v)$ follows from 
  $\suc[S](u) = \suc[S](w) = \suc[S](v)$ with $w \not\in \suc[S](w)$.
  Hence $T_1 \flda{u}{v} U_1$ and symmetrically $T_2 \flda{u}{v} U_2$
  for some term graphs $U_1$ and $U_2$. 
  One verifies $m \cdot m_1 = m \cdot m_2$ for graph morphisms 
  $m_1$ underlying $\flda{w}{u}$, 
  $m_2$ underlying $\flda{w}{v}$, and
  $m$ underlying $\flda{u}{v}$.
  We conclude $T_1 \flda{u}{v} \cdot \uflda{u}{v} T_2$.
\item \icase{$T_1 \uflda{w}{u} S \flda{v'}{w} T_2$} 
  Note $v' \succ u$ since $v' \succ w \succ u$ by the assumption.
  We claim $T_1 \flda{v'}{u} \cdot \uflda{w}{u} T_2$.
  From the assumption we obtain 
  $\lab[S](u) = \lab[S](w) = \lab[S](v')$ and  $\suc[S](u) = \suc[S](w) = \suc[S](v')$, 
  from which we infer
  $\lab[T_1](v') = \lab[T_1](u)$ and $\suc[T_1](v') = \suc[T_1](u)$ (employing $w \not \in \suc[S](w)$).
  Further, $\lab[T_2](w) = \lab[T_2](u)$ and $\suc[T_2](w) = \suc[T_2](u)$   (employing $v' \not \in \suc[S](v')$).
  We conclude $T_1 \flda{v'}{u} U_1$ and similar $T_2 \flda{w}{u} U_2$. 
  Finally, one verifies $U_1 = U_2$ by case analysis.
\item \icase{$T_1 \uflda{u'}{u} S \flda{v'}{v} T_2$ for pairwise distinct $u',u,v'$ and $v$} 
  We show $T_1 \flda{v'}{v} \cdot \uflda{u'}{u} T_2$.
  Let $m$ be the morphism underlying $\flda{u'}{u}$.
  Observe $m(v) = v$ and $m(v') = v'$ by our assumption.
  Hence
  $\lab[T_1](v') = \lab[S](v') = \lab[S](v) = \lab[T_1](v)$ 
  and 
  $\lab[T_1](v') = m^*(\lab[S](v')) = m^*(\lab[S](v)) = \lab[T_1](v)$.
  We obtain $T_1 \flda{v'}{v} U_1$ and symmetrically $T_2 \flda{u'}{u} U_2$
  for some term graphs $U_1$ and $U_2$. 
  Finally, one verifies $U_1 = U_2$ by case analysis as above. 
\end{itemize}
\end{proof}
The above lemma implies confluence of $\fldaeq{}{}$.
Since ${\flda{*}{}} = {\fldaeq{*}{}}$, also $\flda{}{}$ is confluent.

\begin{definition}
  Let $S$ be a term graph and let $p$ be a position in $S$.
  We say that \emph{$S$ folds strictly below $p$ to the term graph $T$}, 
  in notation $S \flds{p} T$, 
  if $S \flda{u}{v} T$ for nodes $u,v \in S$ strictly below $p$ in $S$.
  The graph $S$ \emph{unfolds above $p$ to the term graph $T$}, 
  in notation $S \ufldp{p} T$, if 
  $S \uflda{u}{v} T$ for some unshared node $u \in T$ above $p$, i.e., 
  $\Pos[T](u) = \set{q}$ for $q \leqslant p$.
\end{definition}

\begin{example}\label{ex:fld:pos}
  Reconsider the term graphs $T_1$, $T_2$ and $T_3$ with 
  $T_1 \uflda{\text{\tiny{2}}}{\text{\tiny{3}}} T_2 \flda{\text{\tiny{4}}}{\text{\tiny{5}}} T_3$
  from Example \ref{ex:flda}.
  Then $T_1 \ufldp{2} T_2$ since node $3$ is an unshared node above position $2$ in $T_2$.
  Further $T_2 \flds{2} T_3$ since both nodes $4$ and $5$ are strictly below 
  position $2$ in $T_2$.
\end{example}

Note that for $S \flda{u}{v} T$ the sets of positions  $\Pos[S]$ and $\Pos[T]$ coincide, 
thus the $n$-fold composition $\ufldp{p}^n$ of $\ufldp{p}$ 
 (and the $n$-fold composition $\flds{p}^n$ of $\flds{p}$) is well-defined for $p \in \Pos[S]$.
In the next two lemmas we prove that relations $\ufldp{p}$ and $\flds{p}$ fulfill their intended purpose.

\begin{lemma}\label{l:ufldp:unshared}
  Let $S$ be a term graph and $p$ a position in $S$.
  If $S$ is $\ufldp{p}$-minimal 
  then the node corresponding to $p$ is unshared.
\end{lemma}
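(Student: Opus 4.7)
The plan is to prove the contrapositive: assuming the node $u_p$ corresponding to $p$ in $S$ is shared, I will exhibit a term graph $T$ with $S \ufldp{p} T$. Since $\Pos[S](\grt(S)) = \set{\varepsilon}$ the root is unshared, so shared $u_p$ forces $p \neq \varepsilon$ and guarantees at least one shared node on the chain of nodes corresponding to the prefixes of $p$. Let $u^\ast$ be the \emph{topmost} such shared node, situated at a position $q \leqslant p$. Because the root is unshared, $q$ is non-empty, so we may write $q = q' \cdot i$ and let $u^{\ast\ast}$ denote the node at $q'$. By the choice of $u^\ast$ the node $u^{\ast\ast}$ is unshared in $S$, i.e.\ $\Pos[S](u^{\ast\ast}) = \set{q'}$.

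Next I construct $T$ by \emph{splitting} $u^\ast$: pick a fresh node $u$ with $u \succ u^\ast$, set $\lab[T](u) := \lab[S](u^\ast)$ and $\suc[T](u) := \suc[S](u^\ast)$, and redirect the $i$-th successor of $u^{\ast\ast}$ from $u^\ast$ to $u$, leaving everything else in $S$ untouched. The resulting $T$ is rooted and acyclic (the fresh $u$ only points into successors of $u^\ast$, which cannot reach $u$) and has consistent labels and arities. By Lemma~\ref{l:flda} some $T'$ satisfies $T \flda{u}{u^\ast} T'$, and the collapsing morphism sending $u$ to $u^\ast$ and being the identity elsewhere is seen by direct comparison to yield $T' = S$; hence $S \uflda{u}{u^\ast} T$.

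It remains to verify the $\ufldp{p}$-conditions. The only incoming edge of $u$ in $T$ comes from $u^{\ast\ast}$ at index $i$, and no edge has been redirected \emph{into} $u^{\ast\ast}$, so $\Pos[T](u^{\ast\ast}) = \set{q'}$ and hence $\Pos[T](u) = \set{q' \cdot i} = \set{q}$. Thus $u$ is unshared in $T$ and lies above $p$, witnessing $S \ufldp{p} T$ and contradicting the $\ufldp{p}$-minimality of $S$. The main obstacle is the careful bookkeeping when certifying that the morphism really witnesses $T \flda{u}{u^\ast} S$, especially the successor-preservation clause at the redirected edge of $u^{\ast\ast}$; the rest of the argument reduces to unpacking the definitions.
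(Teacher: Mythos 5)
Your proposal is correct and follows essentially the same route as the paper's proof: you locate the topmost shared node on the path to $p$ (whose parent along that path is unshared), split it by introducing a fresh copy with identical label and successors, redirect the parent's edge to the copy, and observe that collapsing the copy recovers $S$, so that $S \ufldp{p} T$ contradicts (equivalently, refutes) $\ufldp{p}$-minimality. The only differences are presentational — contrapositive instead of contradiction, and an explicit check of the order condition $u \succ u^\ast$ and of $\Pos[T](u) = \set{q}$ — which the paper leaves implicit.
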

\begin{proof}
By way of contradiction, suppose $S$ is $\ufldp{p}$-minimal 
but the node $w$ corresponding to $p$ is shared.
We construct $T$ such that $S \ufldp{p} T$.
We pick an unshared node $v \in S$,
and shared node $v_i \in S$, above $p$ such that $v \reach[S] v_i$.
By a straight forward induction on $p$ we see that
$v$ and $v_i$ exist as $w$ is shared. For this, note that 
at least the root of $S$ is unshared.

Define $T \defsym (\nodes[T], \lab[T], \suc[T])$ as follows:
let $u$ be a fresh node such that $u \suc v_i$.
set $\nodes[T] \defsym \nodes[S] \cup \set{u}$;
set $\lab[T](u) \defsym \lab[S](v_i)$ and $\suc[T](u) \defsym \suc[S](v_i)$;
further replace the edge $v \suci{i} v_i$ by $v \suci{i} u$, that is, 
set $\lab[T](v) \defsym [v_1, \dots, u, \dots, v_l]$ 
for $\lab[S](v) = [v_1, \dots, v_i, \dots, v_l]$.
For the remaining cases, define $\lab[T](w) \defsym \lab[S](u)$ and
$\suc[T](w) \defsym \suc[S](w)$.
One easily verifies $T \flda{u}{v_i} S$. 
Since by way of construction $u$ is an unshared node above $p$, $S \ufldp{p} T$ holds.
\end{proof}

\begin{lemma}\label{l:flds:maxshared}
  Let $S$ be a term graph, let $p$ be a position in $S$.
  If $S$ is $\flds{p}$-minimal then the subgraph
  $\subgraphAt{S}{p}$ is maximally shared.
\end{lemma}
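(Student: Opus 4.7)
The plan is to argue by contradiction: assume $S$ is $\flds{p}$-minimal but $\subgraphAt{S}{p}$ is not maximally shared, then exhibit nodes strictly below $p$ that can be collapsed via $\flda{}{}$, contradicting $\flds{p}$-minimality. The key tool is Lemma~\ref{l:flda}, which reduces the existence of a folding step to a purely local condition on labels and successor sequences. So the obligation is to locate two distinct nodes $u,v$ strictly below $p$ with $\lab[S](u)=\lab[S](v)$ and $\suc[S](u)=\suc[S](v)$.

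To find such a pair, I would pick a counterexample witnessing non-maximal sharing in $\subgraphAt{S}{p}$, namely distinct nodes $u,v \in \subgraphAt{S}{p}$ with $\trepr(\subgraphAt{S}{u}) = \trepr(\subgraphAt{S}{v})$, chosen so that $\size{\trepr(\subgraphAt{S}{u})}$ is minimal among all such pairs. Let $w$ be the node corresponding to $p$. First I would rule out that either $u$ or $v$ equals $w$: if, say, $u=w$, then $v$ is reachable from $w$ and $v \ne w$, so $v$ lies at a position $q$ with $p < q$; then the finite term $\trepr(\subgraphAt{S}{w})$ would equal one of its proper subterms $\trepr(\subgraphAt{S}{v}) = \trepr(\subgraphAt{S}{w})|_{q'}$ for some $q' \ne \varepsilon$, which is impossible (acyclicity of $\subgraphAt{S}{p}$ guarantees the subterm is proper). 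Hence both $u$ and $v$ are strictly below $p$.

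Next I would exploit minimality to upgrade the semantic equality to the syntactic one required by Lemma~\ref{l:flda}. The case $\lab[S](u) \in \VS$ is ruled out by Definition~\ref{d:termgraph}, which forces variable nodes to be uniquely shared, yielding $u=v$ immediately. So $\lab[S](u) = f \in \FS$ and by equality of the represented terms $\lab[S](v) = f$ with $\suc[S](u) = [u_1,\dots,u_k]$, $\suc[S](v) = [v_1,\dots,v_k]$ and $\trepr(\subgraphAt{S}{u_i}) = \trepr(\subgraphAt{S}{v_i})$ for each $i$. Each pair $(u_i,v_i)$ lies in $\subgraphAt{S}{p}$ and represents a strictly smaller term than $\trepr(\subgraphAt{S}{u})$, so minimality forces $u_i = v_i$ for all $i$. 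Consequently $\suc[S](u) = \suc[S](v)$.

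Having established $\lab[S](u) = \lab[S](v)$ and $\suc[S](u) = \suc[S](v)$ with $u \ne v$, I may assume without loss of generality $u \succ v$ by swapping the pair if necessary. Lemma~\ref{l:flda} then yields a term graph $T$ with $S \flda{u}{v} T$, and since both $u$ and $v$ are strictly below $p$ this is precisely an $\flds{p}$-step, contradicting $\flds{p}$-minimality of $S$. The only mildly delicate point in the argument is the first step, where one must carefully use acyclicity together with finiteness of terms to exclude the possibility that the root $w$ of $\subgraphAt{S}{p}$ participates in the counterexample; everything else is essentially bookkeeping on top of Lemma~\ref{l:flda}.
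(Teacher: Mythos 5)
Your proof is correct and follows essentially the same route as the paper's: pick a minimal counterexample pair of distinct nodes in $\subgraphAt{S}{p}$ representing the same term, use minimality to upgrade semantic equality to equal labels and equal successor lists, and invoke Lemma~\ref{l:flda} to produce an $\flds{p}$-step contradicting minimality. The only differences are cosmetic --- you minimise over the size of the represented term where the paper takes a $\reach[S]$-minimal node, and you spell out explicitly (via acyclicity) that neither node can be the node corresponding to $p$, a point the paper leaves implicit.
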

\begin{proof}
Suppose $\subgraphAt{S}{p}$ is not maximally shared. 
We show that $S$ is not $\flds{p}$-minimal.
Pick some node $u \in \subgraphAt{S}{p}$ such that
there exists a distinct node $v \in \subgraphAt{S}{p}$ with 
$\trepr(\subgraphAt{S}{u}) = \trepr(\subgraphAt{S}{v})$.
For that we assume that $u$ is \emph{$\reach[S]$-minimal} in 
the sense
that there is no node $u'$ with $u \reachtir u'$
such that $u'$ would fulfill the above property.
Clearly $\lab[S](u) = \lab[S](v)$ follows from
$\trepr(\subgraphAt{S}{u}) = \trepr(\subgraphAt{S}{v})$. 
Next, suppose $u \suci{i} u_i$ and $v \suci{i} v_i$
for some nodes $u_i \not = v_i$.
But then $u_i$ contradicts minimality of $u$, and so we conclude
$u_i = v_i$. Consequently $\suc[S](u) = \suc[S](v)$ follows as desired.
Without loss of generality, suppose $u \succ v$.
By Lemma~\ref{l:flda}, $S \flda{u}{v} T$ for some term graph $T$,
since $u,v \in \subgraphAt{S}{p}$ also $S \flds{p} T$ holds.
\end{proof}

\begin{theorem}[Adequacy]\label{t:adequacy}
Let $s$ be a term and let $S$ be a term graph such that $\trepr(S) = s$. Then
$${s} \rew[\RS,p] {t} \text{ if and only if } S \ufldp{p}^! \cdot \flds{p}^! \cdot  \REW[\GS(\RS),p] T \tkom$$ 
for some term graph $T$ with $\trepr(T) = t$.
\end{theorem}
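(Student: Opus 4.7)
The plan is to prove both directions of the biconditional separately.

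For the forward direction I would assume $s \rew[\RS,p] t$, via some rule $l \to r \in \RS$ and substitution $\sigma$ (so $\subtermAt{s}{p} = l\sigma$), and let $L \to R$ be the corresponding rule in $\GS(\RS)$. The proof then proceeds in three stages. First, I would normalise $S$ with respect to $\ufldp{p}$ to obtain $S_1$; termination here is not an immediate size argument since unfolding \emph{increases} the graph, so I would argue that every $\ufldp{p}$-step strictly increases the number of \emph{unshared} nodes above $p$, while the total count of distinct nodes above $p$ is bounded by $\size{p}+1$. Lemma~\ref{l:ufldp:unshared} then guarantees that the node of $S_1$ at $p$ is unshared. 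Second, I would normalise $S_1$ with respect to $\flds{p}$ to obtain $S_2$; termination is immediate from Lemma~\ref{l:morph:termeq}, since each $\flda{u}{v}$-step with $u \neq v$ induces a non-injective morphism and therefore strictly decreases graph size. Folding only identifies nodes strictly below $p$, so the node at $p$ is preserved, and by Lemma~\ref{l:flds:maxshared} the subgraph $\subgraphAt{S_2}{p}$ is maximally sharing.

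In the third stage I would exploit the preparatory lemmas. Since $\trepr$ is invariant under both $\ufldp{p}$ and $\flds{p}$ by Lemma~\ref{l:morph:termeq}, we have $\trepr(\subgraphAt{S_2}{p}) = l\sigma$ and hence $\subgraphAt{S_2}{p} \in \Shared(l\sigma)$. Picking a fresh renaming $L' \to R'$ of $L \to R$ and applying Lemma~\ref{l:match:lhs} furnishes a morphism $m \colon L' \to \subgraphAt{S_2}{p}$ with $\sigma_m = \sigma$ on $\Var(l)$. This yields the graph rewrite step $S_2 \REW[\GS(\RS),p] T$ with $T = \replaceAt{S_2}{u}{m(R')}$, where $u$ is the node corresponding to $p$. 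To conclude $\trepr(T) = t$, I would apply Lemma~\ref{l:subst:r} to obtain $\trepr(\subgraphAt{T}{p}) = r\sigma_m = r\sigma$, and then Lemma~\ref{l:context} to lift this through the surrounding context.

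For the backward direction I would start from a sequence $S \ufldp{p}^! S_1 \flds{p}^! S_2 \REW[\GS(\RS),p] T$. The fold/unfold steps preserve $\trepr$ by Lemma~\ref{l:morph:termeq}, so $\trepr(S_2) = s$. The graph rewrite step invokes some rule $L \to R \in \GS(\RS)$, corresponding to some $l \to r \in \RS$, via a morphism $m$ at position $p$. Lemma~\ref{l:subst:l} gives $\trepr(\subgraphAt{S_2}{p}) = l\sigma_m$, Lemma~\ref{l:subst:r} gives the represented right-hand side as $r\sigma_m$, and Lemma~\ref{l:context} assembles these into the statement that $\trepr(T)$ equals $s$ with the subterm at $p$ replaced by $r\sigma_m$, establishing $s \rew[\RS,p] t$ with witness substitution $\sigma_m$.

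The main obstacle I expect is the termination argument for $\ufldp{p}$ in Stage~1. Unlike $\flds{p}$, which reduces graph size, $\ufldp{p}$ enlarges it, so a naive size measure fails; the careful counting of unshared nodes above $p$, together with the verification that each unfolding step indeed strictly increases that count, is the subtle point. A secondary delicate check is that the graph rewrite step in Stage~3 is truly applicable at position $p$---that the node at $p$ in $S_2$ is both unshared (established in Stage~1 and preserved through Stage~2, since folding happens strictly below $p$) and is the root of a maximally shared subgraph (established in Stage~2), so that Lemma~\ref{l:match:lhs} becomes applicable.
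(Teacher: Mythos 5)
Your proposal follows the paper's own proof almost step for step: normalise with $\ufldp{p}$ and $\flds{p}$, use Lemma~\ref{l:ufldp:unshared} and Lemma~\ref{l:flds:maxshared} to obtain an unshared redex node whose subgraph is maximally shared, apply Lemma~\ref{l:match:lhs} to get the matching morphism with $\sigma_m = \sigma$, and finish with Lemmas~\ref{l:subst:l}, \ref{l:subst:r} and \ref{l:context}; your termination argument for $\ufldp{p}$ (counting unshared nodes above $p$, bounded by $\size{p}+1$) is exactly the argument of Lemma~\ref{l:fold:bound}.

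The one point you misplace is the purpose of the unsharedness of the node $u$ corresponding to $p$. Lemma~\ref{l:match:lhs} needs only that $\subgraphAt{S_2}{p}$ is maximally shared; unsharedness of $u$ is irrelevant for applicability of the rule. Its actual job is the context argument: Lemma~\ref{l:context} gives $C[\trepr(H),\dots,\trepr(H)]$ where $C = \trepr(\replaceAt{S_2}{u}{\hole})$ has one hole for \emph{each} position of $u$, so only because $u$ is unshared does $C$ have exactly one hole, located at $p$, and only then does the graph step correspond to a single $\rew[\RS,p]$-step rather than a parallel step --- this is precisely the problem of Example~\ref{ex:problem:2} that the $\ufldp{p}$-normalisation is there to cure. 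This is needed in \emph{both} directions, but in your backward direction you simply assert that $\trepr(T)$ equals $s$ with the subterm at $p$ replaced by $r\sigma_m$, without justification. To repair it, argue there as the paper does: $\flds{p}$ preserves $\ufldp{p}$-minimality (folding acts strictly below $p$), so by Lemma~\ref{l:ufldp:unshared} the redex node in $S_2$ is unshared, the context has a single hole at $p$, and then Lemma~\ref{l:context} yields the claimed single replacement. With that insertion your argument coincides with the paper's proof.
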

\begin{proof}
First, we consider the direction from right to left.
Assume $S \ufldp{p}^! U \flds{p}^! V  \REW[\GS(\RS),p] T$.
Note that $\flds{p}$ preserves $\ufldp{p}$-minimality.
We conclude $V$ is $\ufldp{p}$-minimality as $U$ is. 
Let $v \in V$ be the node corresponding to $p$. 
By Lemma~\ref{l:ufldp:unshared} we see $\Pos[U](v) = \set{p}$.
Now consider the step $V  \REW[\GS(\RS),p] T$. 
There exists a renaming ${L' \to R'}$ of 
${(L \to R)} \in \GS(\RS)$ such that $m \colon L' \to \subgraphAt{V}{v}$
is a morphism and $T = \replaceAt{V}{v}{m(R')}$.
Set $l \defsym \trepr(L')$ and $r \defsym \trepr(R')$, by definition ${(l \to r)} \in \RS$.
By Lemma~\ref{l:subst:l} we obtain $l\sigma_m = \trepr(\subgraphAt{V}{v})$
for the substitution $\sigma_m$ induced by the morphism $m$. 
Define the context $C \defsym \trepr(\replaceAt{V}{v}{\hole})$. 
As $v$ is unshared, $C$ admits exactly one occurrence of $\hole$, 
moreover the position of $\hole$ in $C$ is $p$.
By Lemma~\ref{l:context},
$$
\trepr(V) 
= \trepr(\replaceAt{V}{v}{\subgraphAt{V}{v}})
= C[\trepr(\subgraphAt{V}{v})]
= C[l\sigma_m] \tpkt
$$
Set $T_v \defsym \subgraphAt{(m(R') \cup V)}{\grt(m(R'))}$, 
and observe $T = \replaceAt{V}{v}{m(R')} = \replaceAt{V}{v}{T_v}$.
Using Lemma~\ref{l:context} and Lemma~\ref{l:subst:r} we see
$$
\trepr(T)
= \trepr(\replaceAt{V}{v}{T_v})
= C[\trepr(T_v)]
= C[r\sigma_m] \tpkt
$$
As $\trepr(S) = \trepr(V)$ by Lemma~\ref{l:morph:termeq}, 
$\trepr(S) = C[l\sigma_m] \rew[\RS,p] C[r\sigma_m] = \trepr(T)$ follows.

Finally, consider the direction from left to right.
For this suppose 
$
s = C[l\sigma] \rew[\RS,p] C[r\sigma] = t
$ 
where the position of the hole in $C$ is $p$. 
Suppose $S \ufldp{p}^! U \flds{p}^! V$ for $\trepr(S) = s$. We prove 
that there exists $T$ such that 
$V \REW[\GS(\RS),p] T$ and $\trepr(T) = t$. 
Note that $V$ is $\flds{p}$-minimal and, as observed above, 
it is also $\ufldp{p}$-minimal. 
Let $v \in V$ be the node corresponding to $p$, 
by Lemma~\ref{l:ufldp:unshared} the node $v$ is unshared.
Next, observe $l\sigma = \subtermAt{s}{p} = \trepr(\subgraphAt{S}{p}) = \trepr(\subgraphAt{V}{v})$ 
since $\trepr(S) = \trepr(V)$ (c.f. Lemma~\ref{l:morph:termeq}).
Additionally, Lemma~\ref{l:flds:maxshared} reveals $\subgraphAt{V}{v} \in \Shared(l\sigma)$.
Further, by Lemma~\ref{l:context} we see
$$
s = C[l\sigma] 
= \trepr(V) 
= \trepr(\replaceAt{V}{v}{\subgraphAt{V}{v}})
= \trepr(\replaceAt{V}{v}{\hole})[l\sigma]  \tpkt
$$
Since the position of the hole in $C$ and $\trepr(\replaceAt{V}{v}{\hole})$
coincides, we conclude that $C = \trepr(\replaceAt{V}{v}{\hole})$.

Let ${L \to R} \in \SGRS(\RS)$ be the rule corresponding to $(l \to r) \in \RS$, 
let $(L' \to R') \isomorphic (L \to R)$ be a renaming for $V$.
As $L' \in \Tree(l)$ and $\subgraphAt{V}{v} \in \Shared(l\sigma)$, 
by Lemma~\ref{l:match:lhs} there exists a morphism $m \colon L' \to \subgraphAt{V}{v}$
and hence $V \REW[\GS(\RS),p] T$ for $T = \replaceAt{V}{v}{m(R')}$.
Note that for the induced substitution $\sigma_m$ and $x \in \Var(l)$, 
$\sigma_m(x) = \sigma(x)$. 
Set $T_v \defsym \subgraphAt{(m(R') \cup V)}{\grt(m(R'))}$, 
hence $T = \replaceAt{V}{v}{T_v}$ and moreover $r\sigma = r\sigma_m = \trepr(T_v)$
follows as in the first half of the proof.
Employing Lemma~\ref{l:context} we obtain
$$
t
= C[r\sigma]
= \trepr(\replaceAt{V}{v}{\hole})[r\sigma]
= \trepr(\replaceAt{V}{v}{T_v})
= \trepr(T)\tpkt 
$$
\end{proof}

We define $S \REWS[\GS(\RS),p] T$ if and only if 
${S} \ufldp{p}^! \cdot \flds{p}^! {U} \REW[\GS(\RS),p] {T}$.
Employing this notion we can rephrase the conclusion of the Adequacy Theorem 
as: ${s} \rew[\RS,p] {t}$ if and only if $S \REWS[\GS(\RS),p] T$
for $\trepr(S) = s$ and $\trepr(T) = t$.


\section{Implementing Term Rewriting Efficiently} \label{Complexity}
Opposed to term rewriting, graph rewriting 
induces linear size growth in the length of derivations. 
The latter holds as a single step $\REW[\GS]$ admits constant size growth:
\begin{lemma}\label{l:sizebounds:simple}
  If $S \REW[\GS] T$ then $\size{T} \leqslant \size{S} + \Delta$ for 
  some $\Delta \in \NAT$ depending only on $\GS$.
\end{lemma}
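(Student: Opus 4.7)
The plan is to unfold the definition of a single step $S \REW[\GS,u,L\to R] T$ and bound $|T|$ directly in terms of $|S|$ and a constant determined by the rule. By definition, $T = \replaceAt{S}{u}{m(R')}$ for some renaming $L' \to R'$ of $L \to R$ with $\nodes[S] \cap \nodes[L' \to R'] = \varnothing$, and a morphism $m \colon L' \to \subgraphAt{S}{u}$. Unfolding $\replaceAt{\cdot}{\cdot}{\cdot}$, $\replaceNode{\cdot}{\cdot}{\cdot}$ and $\graphUnion$, the node set of $T$ is contained in $(\nodes[S] \setminus \set{u}) \cup \nodes[m(R')]$, so it suffices to bound the number of nodes in $m(R')$ that are not already in $S$.

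The key observation is that the domain of $m$ is exactly $\nodes[L']$, and the nodes of $L' \to R'$ that lie in $L' \cap R'$ are precisely the variable nodes $\Var(R')$ by the definition of a graph rewrite rule. Since $m(R')$ is obtained from $R'$ by replacing every node in $\dom(m) \cap \nodes[R']$ by its image in $S$, every node of $m(R')$ is either a node of $S$ (via the image of $m$ on $\Var(R')$) or an element of $\nodes[R'] \setminus \nodes[L']$. Hence $|\nodes[m(R')] \setminus \nodes[S]| \leqslant |\nodes[R'] \setminus \nodes[L']| = |R| - |\Var(R)|$, using that $L' \to R' \isomorphic L \to R$. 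Taking subgraphs can only shrink the node set, so we obtain $\size{T} \leqslant \size{S} + |R| - |\Var(R)| \leqslant \size{S} + \size{R}$.

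Finally, setting $\Delta \defsym \max\set{\size{R} \mid (L \to R) \in \GS}$ yields the claim; this maximum exists and is finite because $\GS$ is finite (in the application of interest $\GS = \GS(\RS)$, which inherits finiteness from $\RS$). There is no real obstacle here beyond careful bookkeeping of which nodes of $m(R')$ get identified with existing nodes of $S$ through $m$; the acyclicity and rootedness conditions on term graphs are not needed for this size estimate.
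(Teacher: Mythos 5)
Your proof is correct and matches the paper's argument, which simply sets $\Delta \defsym \max\set{\size{R} \mid (L \to R) \in \GS}$ and notes that the bound follows from the definition of $\REW[\GS]$; you merely spell out the bookkeeping the paper leaves implicit. (One immaterial slip: for an arbitrary rule $L \to R$ the shared nodes $L \cap R$ need not be exactly $\Var(R)$ --- that is specific to $\GS(\RS)$ --- but your bound only needs that new nodes come from $\nodes[R'] \setminus \nodes[L']$, so the estimate stands.)
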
 
\begin{proof}
  Set $\Delta \defsym \max\set{\size{R} \mid {(L \to R)} \in \GS}$
  and the lemma follows by definition.
\end{proof}
It is easy to see that a graph rewrite step $S \REW[\GS] T$ can be performed in time polynomial
in the size of the term graph $S$. By the above lemma we obtain that $S$ can be normalised 
in time polynomial in $\size{S}$ and the length of derivations. 
In the following, we prove a result similar to Lemma~\ref{l:sizebounds:simple} for the relation 
$\REWS[\GS]$, where (restricted) folding and unfolding is incorporated.
The main obstacle is that due to unfolding, size growth of $\REWS[\GS]$ is not bound by a constant in general. 
We now investigate into the relation $\ufldp{p}$ and $\flds{p}$. 
\begin{lemma}\label{l:fold:bound}
  Let $S$ be a term graph and let $p$ a position in $S$. 
  \begin{enumerate}
  \item\label{l:fold:bound:1} If $S \ufldp{p}^\ell T$ then $\ell \leqslant \size{p}$ and 
    $\size{T} \leqslant \size{S} + \size{p}$.
  \item\label{l:fold:bound:2} If $S \flds{p}^\ell T$ then $\ell \leqslant \size{\subgraphAt{S}{p}}$
    and $\size{T} \leqslant \size{S}$.
  \end{enumerate}
\end{lemma}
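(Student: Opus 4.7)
The plan is to handle the two parts separately. Both rely on the size accounting given by Lemma~\ref{l:morph:termeq}: a single step $\flda{u}{v}$ with $u \neq v$ strictly decreases graph size by exactly one, so counting steps coincides with counting size changes.

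Part~\ref{l:fold:bound:2} is the easier case. Every step $S \flds{p} T$ is a $\flda{u}{v}$ with $u,v$ strictly below $p$; thus both lie in $\subgraphAt{S}{p}$, and neither is the node at $p$. Because the morphism underlying $\flda{u}{v}$ is the identity outside $u$, the node corresponding to $p$ is preserved, $p$ remains a position in $T$, and $\size{\subgraphAt{T}{p}} = \size{\subgraphAt{S}{p}} - 1$. After $\ell$ steps the subgraph at $p$ still contains at least the node at $p$, so $\ell \leqslant \size{\subgraphAt{S}{p}} - 1 \leqslant \size{\subgraphAt{S}{p}}$; moreover $\size{T} = \size{S} - \ell \leqslant \size{S}$.

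Part~\ref{l:fold:bound:1} is more delicate. Each $\ufldp{p}$ step is the inverse of a $\flda{u}{v}$ and hence adds exactly one node, so it suffices to bound $\ell \leqslant \size{p}$. As potential I would use $\mu(S,p)$, the number of shared nodes lying on the path $u_0, u_1, \dots, u_k$ of length $k = \size{p}$ from $\grt(S)$ to the node at $p$ (these nodes are distinct by acyclicity). Since the root has no incoming edges, $u_0$ is unshared, whence $\mu(S,p) \leqslant \size{p}$. The heart of the matter is to show that each unfolding step strictly decreases $\mu$. An unfolding $S \ufldp{p} T$ corresponds to $T \flda{u}{v_i} S$, where the introduced node $u \in T$ is unshared and above $p$. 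Writing $m$ for the underlying morphism (identity except $m(u) = v_i$), a short positional accounting gives $\Pos[T](w) = \Pos[S](w)$ for every $w \in T$ with $w \not\in \set{u,v_i}$, while $\Pos[S](v_i)$ splits as the disjoint union $\Pos[T](u) \cup \Pos[T](v_i)$. So only the pair $(u,v_i)$ can change sharing. Since $u$ is unshared the set $\Pos[T](u)$ is a singleton, forcing $v_i$ to be shared in $S$; on the path the shared $v_i$ in $S$ is replaced by the unshared $u$ in $T$, contributing $-1$ to $\mu$ while every other node on the path retains its sharing status. Consequently $\ell \leqslant \mu(S,p) \leqslant \size{p}$ and $\size{T} = \size{S} + \ell \leqslant \size{S} + \size{p}$.

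The main obstacle is establishing the positional identities $\Pos[T](w) = \Pos[S](w)$ for $w \not\in \set{u,v_i}$ and $\Pos[S](v_i) = \Pos[T](u) \cup \Pos[T](v_i)$ (disjoint); in particular, one must check that descendants of $v_i$, whose in-edges grow since both $u$ and $v_i$ point to them in $T$, nevertheless preserve their sharing. Once these identities are in place, the rest is routine counting.
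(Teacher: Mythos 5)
Your proof is correct and follows essentially the paper's own argument: part 2 is the same counting on $\size{\subgraphAt{S}{p}}$, which drops by exactly one per $\flds{p}$-step, and in part 1 your potential $\mu$ (shared nodes on the path to $p$, strictly decreasing) is precisely the complement of the paper's measure, the set of unshared nodes above $p$, which the paper shows grows strictly with each $\ufldp{p}$-step and is bounded by $\size{p}+1$. The positional identities you flag as the main obstacle do hold and are routine: they follow from the fact that $\Pos[S]$ and $\Pos[T]$ coincide for $\flda{}{}$-related graphs together with $m^{-1}(w) = \set{w}$ for every $w \neq v_i$, so your argument goes through.
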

\begin{proof}
We consider the first assertion.
For term graphs $U$, let
$P_U = \{w \mid \Pos[U](w) = \{q\} \text{ and } q \leqslant p\}$
be the set of unshared nodes above $p$.
Consider $U \ufldp{p} V$.
Observe that $P_U \subset P_V$ holds:
By definition $U \uflda{u}{v} V$ where $\Pos[V](u) = \set{q}$ with $q \leqslant p$.
Clearly, $P_U \subseteq P_V$, but moreover $u \in P_V$ whereas $u \not \in P_U$. 
Hence for $(S \ufldp{p}^\ell T) = S = S_0 \ufldp{p} \dots \ufldp{p} S_\ell = T$, 
we observe $P_S = P_{S_0} \subset \dots P_{S_\ell} = P_T$.
Note that $\size{P_S} \geqslant 1$ since $\grt(S) \in P_s$. 
Moreover, $\size{P_T} = \size{p} + 1$ since the node corresponding to $p$ in $T$ is unshared (c.f. Lemma~\ref{l:ufldp:unshared}).
Thus from $P_{S_i} \subset P_{S_{i+1}}$ ($0 \leqslant i < \ell$) we conclude $\ell \leqslant \size{p}$.
Next, we see $\size{T} \leqslant \size{S} + \size{p}$ as 
$\size{T} = \size{S} + \ell$ by definition of $\ufldp{p}$.

We now prove the second assertion.
Consider term graphs $U$ and $V$ such that 
$U \flds{p} V$. By definition 
$U \flda{u}{v} V$ where nodes $u$ and $v$ are strictly below position $p$ in $U$.
Hence $U \geqm V$ for the morphism $m$ underlying $\flda{u}{v}$. 
As a simple consequence of Lemma~\ref{l:morph:subgraph}, 
we obtain $\subgraphAt{U}{p} \gem \subgraphAt{V}{p}$
and thus $\size{\subgraphAt{U}{p}} > \size{\subgraphAt{V}{p}}$.
From this we conclude the lemma as above, where for 
$\size{T} \leqslant \size{S}$ we employ that 
if $U \flda{u}{v} V$ then $\size{\subgraphAt{V}{p}} = \size{\subgraphAt{U}{p}} - 1$.
\end{proof}
By combining the above two lemmas we derive the following:
\begin{lemma} \label{l:sizebounds}
  If $S \REWS T$
  then $\size{T} \leqslant \size{S} + \depth(S) + \Delta$ and 
  $\depth(T) \leqslant \depth(S) + \Delta$ for some $\Delta \in \NAT$ depending only on $\GS$.
\end{lemma}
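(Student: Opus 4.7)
The plan is to unfold the definition of $\REWS$ as a three-phase process and apply the bounds already established for each phase. By definition, $S \REWS[\GS,p] T$ means there exist $S'$ and $S''$ with
\[
S \ufldp{p}^{\ell_1} S' \flds{p}^{\ell_2} S'' \REW[\GS,p] T,
\]
so I will track both $\size{\cdot}$ and $\depth(\cdot)$ through each of these three steps, using Lemma~\ref{l:fold:bound} for the first two and Lemma~\ref{l:sizebounds:simple} together with a direct analysis for the last.

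For the size bound, Lemma~\ref{l:fold:bound}(\ref{l:fold:bound:1}) gives $\size{S'} \leqslant \size{S} + \size{p}$, Lemma~\ref{l:fold:bound}(\ref{l:fold:bound:2}) gives $\size{S''} \leqslant \size{S'}$, and Lemma~\ref{l:sizebounds:simple} yields $\size{T} \leqslant \size{S''} + \Delta_1$ for $\Delta_1 \defsym \max\{\size{R} \mid (L \to R) \in \GS\}$. Since $p \in \Pos[S]$ we trivially have $\size{p} \leqslant \depth(S)$, and chaining the inequalities produces $\size{T} \leqslant \size{S} + \depth(S) + \Delta_1$.

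For the depth bound, the key observation is that each individual step $U \flda{u}{v} V$ preserves the set of positions (as remarked after Example~\ref{ex:fld:pos}), hence $\depth$ is invariant under both $\ufldp{p}$ and $\flds{p}$; thus $\depth(S'') = \depth(S)$. It remains to control the last step $S'' \REW[\GS,p] T$, where $T = \replaceAt{S''}{u}{m(R')}$ for the node $u$ corresponding to $p$. Any maximal path in $T$ either avoids $u$, in which case its length is bounded by $\depth(S'')$, or traverses $u$, in which case its length is at most $\size{p} + \depth(m(R'))$. For the latter, each variable node of $R'$ is mapped by $m$ into $\subgraphAt{S''}{u}$, so
\[
\depth(m(R')) \leqslant \depth(R') + \depth(\subgraphAt{S''}{u}) \leqslant \depth(R) + (\depth(S'') - \size{p}).
\]
Setting $\Delta_2 \defsym \max\{\depth(R) \mid (L \to R) \in \GS\}$ gives $\depth(T) \leqslant \depth(S'') + \Delta_2 = \depth(S) + \Delta_2$.

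Taking $\Delta \defsym \max(\Delta_1, \Delta_2)$ settles both bounds simultaneously. The main obstacle is the depth estimate in the third phase: the naive bound $\depth(T) \leqslant \depth(S'') + \depth(R)$ would suffice here, but making it precise requires the observation that the variable instantiations $m(R')$ takes from $\subgraphAt{S''}{u}$ are themselves already nested at depth $\size{p}$ below the root, so one does not pay twice. Everything else reduces to routine chaining of the existing lemmas.
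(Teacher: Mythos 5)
Your proof is correct and follows essentially the same route as the paper: decompose $\REWS$ into the unfold, fold, and rewrite phases, chain Lemma~\ref{l:fold:bound} with $\size{p} \leqslant \depth(S)$ and Lemma~\ref{l:sizebounds:simple} for the size bound, and use preservation of positions (hence depth) under folding and unfolding for the depth bound. The only difference is that you spell out, via the path analysis through the unshared redex node, the estimate $\depth(T) \leqslant \depth(V) + \Delta$ for the final rewrite step, which the paper simply records as an easy observation; your choice of $\Delta$ as a maximum of the two constants is equally admissible since the lemma only requires some $\Delta$ depending on $\GS$.
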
 
\begin{proof}
Consider $S \REWS T$, i.e., ${S} \ufldp{p}^! U \flds{p}^! V \REW[\GS] {T}$ for 
some position $p$ and term graphs $U$ and $V$. 
Lemma~\ref{l:fold:bound} reveals $\size{U} \leqslant \size{S} + \size{p}$ and
further $\size{V} \leqslant \size{U}$
for $\Delta \defsym \max\set{\size{R} \mid {(L \to R)} \in \GS}$.
As $\size{p} \leqslant \depth(S)$ we see $\size{V} \leqslant \size{S} + \depth(S)$.
Since $V \REW T$ implies $\size{T} \leqslant \size{V} + \Delta$ (c.f. Lemma~\ref{l:sizebounds:simple})
we establish $\size{T} \leqslant \size{S} + \depth(S) + \Delta$.
Finally, $\depth(T) \leqslant \depth(S) + \Delta$ follows from the easy observation that
both $U \ufldp{p} V$ and $U \flds{p} V$ imply $\depth(U) = \depth(V)$,
likewise $V \REW[\GS] {T}$ implies $\depth(T) \leqslant V + \Delta$.
\end{proof}

\begin{lemma}\label{l:space}
  If $S \REWSL{\ell} T$ then $\size{T} \leqslant (\ell + 1) \size{S} + \ell^2 \Delta$
  for $\Delta \in \NAT$ depending only on $\GS$.
\end{lemma}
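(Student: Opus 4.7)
The plan is a straightforward induction on $\ell$, using Lemma~\ref{l:sizebounds} to control both the size and the depth of the intermediate term graphs at each step. Write the $\ell$-step derivation as $S = S_0 \REWS S_1 \REWS \dots \REWS S_\ell = T$. Applying Lemma~\ref{l:sizebounds} once gives the two pointwise bounds
\begin{equation*}
  \size{S_{i+1}} \leqslant \size{S_i} + \depth(S_i) + \Delta
  \qquad\text{and}\qquad
  \depth(S_{i+1}) \leqslant \depth(S_i) + \Delta \tpkt
\end{equation*}
Iterating the second inequality yields $\depth(S_i) \leqslant \depth(S) + i\Delta$.

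Since $\depth(S) \leqslant \size{S}$ by the standard relation between depth and number of nodes, I would substitute this into the first inequality to obtain $\size{S_{i+1}} \leqslant \size{S_i} + \size{S} + (i+1)\Delta$. Then telescoping from $i=0$ to $i=\ell-1$ gives
\begin{equation*}
  \size{T} = \size{S_\ell} \leqslant \size{S} + \ell\cdot\size{S} + \Delta\sum_{i=1}^{\ell}i
  = (\ell+1)\size{S} + \Delta\cdot\tfrac{\ell(\ell+1)}{2} \tpkt
\end{equation*}
Using $\tfrac{\ell(\ell+1)}{2} \leqslant \ell^2$ for $\ell \geqslant 1$ (and handling $\ell = 0$ trivially, since then $T = S$) yields the claim $\size{T} \leqslant (\ell+1)\size{S} + \ell^2 \Delta$.

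I do not foresee any real obstacle: all the work has been done in Lemma~\ref{l:sizebounds}, which separately controls size growth (which depends on the depth of the current graph) and depth growth (which is bounded by a constant). The only mildly subtle point is realising that one must keep track of the depth in parallel with the size across the induction, rather than naively iterating Lemma~\ref{l:sizebounds:simple}: a single $\REWS$ step can add up to $\depth(S)$ many nodes via unfolding, so after $\ell$ steps the cumulative contribution of unfolding is at most $\ell\cdot\size{S}$, which is precisely where the factor $(\ell+1)$ in front of $\size{S}$ comes from, while the $\ell^2\Delta$ term collects the quadratic contribution from $\Delta$ accumulating along the derivation.
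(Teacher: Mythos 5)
Your proof is correct and follows essentially the same route as the paper: both arguments iterate Lemma~\ref{l:sizebounds}, tracking the depth separately via $\depth(S_i) \leqslant \depth(S) + i\Delta$, bounding $\depth(S)$ by $\size{S}$, and accumulating the size increments over the $\ell$ steps (the paper phrases this as an induction on $\ell$, you as a telescoping sum, which is the same computation). Your constant $\tfrac{\ell(\ell+1)}{2}\Delta$ is in fact slightly sharper than the stated $\ell^2\Delta$, and your handling of $\ell=0$ is fine.
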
 
\begin{proof}
We prove the lemma by induction on $\ell$. The base case follows trivially, 
so suppose the lemma holds for $\ell$, we establish the lemma for $\ell + 1$. 
Consider a derivation $S \REWSL{\ell} T \REWS U$. 
By induction hypothesis, 
$\size{T} \leqslant (\ell + 1) \size{S} + \ell^2 \Delta$. 
Iterative application of 
Lemma~\ref{l:sizebounds} reveals
$\depth(T) \leqslant \depth(S) + \ell\Delta$.
Thus
\begin{align*}
  \size{U} & \leqslant \size{T} + \depth(T) + \Delta \\ 
  & \leqslant \bigl((\ell + 1) \size{S} + \ell^2 \Delta\bigr) + \bigl(\depth(S) + \ell\Delta\bigr) + \Delta \\ 
  & \leqslant (\ell + 2) \size{S} + \ell^2\Delta + \ell\Delta + \Delta \\
  & \leqslant (\ell + 2) \size{S} + (\ell + 1)^2 \Delta  \tpkt
\end{align*}
\end{proof}

In the sequel, we prove that an arbitrary graph rewrite step $S \REWS[] T$
can be performed in time cubic in the size of $S$.  
Lemma~\ref{l:space} then allows us to lift the bound on steps
to a polynomial bound on derivations in the size of $S$ and the length of derivations.
We closely follow the notions of \cite{Kozen}. 
As model of computation we use $k$-tape \emph{Turing Machines} (TM for short) with
dedicated input- and output-tape. If not explicitly mentioned otherwise, 
we will use deterministic TMs.
We say that a (possibly nondeterministic) TM computes a relation
$R \subseteq \Sigma^* \times \Sigma^*$ if for all $(x,y) \in R$, on input $x$ there 
exists an accepting run such that $y$ is written on the output tape.

We fix a \emph{standard encoding} for term graphs $S$. 
We assume that for each function symbol $f \in \FS$ a corresponding tape-symbols is present. 
Nodes and variables are represented by natural numbers, encoded in binary notation and possibly padded by zeros.
We fix the invariant that natural numbers $\set{1, \dots,\size{S}}$ are used 
for nodes and variables in the encoding of $S$.
Thus variables (nodes) of $S$ are representable in space $\bigO(\log(\size{S}))$.
Finally, term graphs $S$ are encoded as a \emph{ordered} list of \emph{node specifications}, i.e., 
triples of the form  $\tuple{v,\Lab(v),\suc(v)}$ for all 
$v \in S$ (compare \cite[Section 13.3]{TeReSe}).
We call the entries of a node specification
\emph{node-field}, \emph{label-field} and \emph{successor-field} respectively.
We additionally assume that each node specification has a status flag (constant in size) attached.
For instance, we use this field below to mark subgraphs.

For a suitable encoding of tuples and lists, 
a term graph $S$ is representable in size $\bigO(\log(\size{S})  * \size{S})$. 
For this, observe that the length of $\suc(v)$ is bound by the maximal arity of the fixed signature $\FS$.
In this spirit, we define the \emph{representation size} of a term graph $S$ as $\rsize{S} \defsym \bigO(\log(\size{S}) * \size{S})$. 

Before we investigate into the computational complexity of $\REWS[]$, 
we prove some auxiliary lemmas.
\begin{lemma}\label{l:comp:marking}
  The subgraph $\subgraphAt{S}{u}$ of $S$ can be marked in quadratic time in $\rsize{S}$.
\end{lemma}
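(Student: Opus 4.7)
The plan is to implement a standard graph traversal (reachability from $u$) directly on the tape encoding, setting the status flag of every node reachable from $u$. Because the term graph is stored as a sequential list of node specifications (no random access), each lookup of a node by its identifier costs a full tape scan, so the cost analysis will reduce to counting lookups times the cost of a scan.

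Concretely I would proceed as follows. On an auxiliary work tape, maintain a list of node identifiers still to be processed; initialise it with the identifier of $u$. Then repeat: (i) pop the next identifier $v$ from the work list; (ii) scan the encoding of $S$ to locate the node specification $\langle v, \Lab(v), \suc(v)\rangle$; (iii) inspect its status flag—if $v$ is already marked, discard it and continue; otherwise set the flag to \emph{marked} and append every identifier occurring in $\suc(v)$ to the work list. Terminate when the work list is empty. By construction, after termination the marked nodes are exactly those reachable from $u$, i.e., the nodes of $\subgraphAt{S}{u}$.

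For the time bound, observe first that identifiers require $\bigO(\log \size{S})$ tape cells and that the successor field of any node has length bounded by the maximal arity in $\FS$, hence only $\bigO(\log \size{S})$ cells; copying a successor onto the work tape therefore costs $\bigO(\log \size{S})$. Each node is marked at most once, so at most $\size{S}$ iterations actually perform step (iii) non-trivially; however, the work list may contain duplicate entries, and the total number of entries ever placed on it is bounded by the number of edges, which is $\bigO(\size{S})$ since arities are constant. Each iteration performs one scan of the encoded graph, costing $\bigO(\rsize{S})$. Summing gives $\bigO(\size{S}) \cdot \bigO(\rsize{S}) = \bigO(\rsize{S}^2)$, since $\size{S} \leqslant \rsize{S}$.

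The only delicate point is avoiding blowup of the work list and extra logarithmic factors in the per-step work. As long as the work list stores only identifiers (never whole specifications) and termination of the loop is detected by emptiness of the work tape rather than by re-scanning, each iteration is dominated by the single linear scan through the encoding of $S$, and the quadratic bound follows. I do not expect a genuine obstacle here; the argument is essentially bookkeeping about how many full scans a BFS/DFS on a tape-resident adjacency list requires.
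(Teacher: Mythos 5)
Your proposal is correct and follows essentially the same route as the paper: a breadth-first marking of the nodes reachable from $u$, where each step is dominated by a linear scan of the encoded graph, and the number of steps is bounded by $\bigO(\size{S})$, giving the quadratic bound in $\rsize{S}$. The only cosmetic difference is that the paper keeps the frontier implicitly via temporary/permanent flags in the node specifications instead of an explicit work-list tape, which changes nothing in the analysis.
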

\begin{proof}
We use a TM that operates as follows:
First, the graph $S$ is copied from the input tape to a working tape. Then 
the node specifications of $v \in \subgraphAt{S}{u}$ are marked in a breath-first manner.
Finally the resulting graph is written on the output tape.
In order to mark nodes two flags are employed, the \emph{permanent} and the \emph{temporary} flag. 
A node $v$ is marked permanent by marking its node specification permanent, 
and all node specifications of $\suc(v)$ temporary.
Initially, $u$ is marked.
Afterward, the machine iteratively marks temporary marked node permanent, 
until all nodes are marked permanent.

Notice that a node $v$ can be marked in time linear in $\rsize{S}$. 
For that, the flag of the node specification of $v$ is set appropriately, 
and $\suc(v)$ is copied on a second working tape.
Then $S$ is traversed, and the node-field of each encountered node specification
is compared with the current node written on the second working. If the nodes coincide, 
then the flag of the node specification is adapted, the pointer on the second working tape
advances to the next node, and the process is repeated. Since the length of $\suc(v)$ is
bounded by a constant (as $\FS$ is fixed), marking a node requires a constant 
number of iterations.  
Since the machine marks at most $\size{S} \leqslant \rsize{S}$ nodes 
we obtain an overall quadratic time bound in $\rsize{S}$.
\end{proof}

\begin{lemma}\label{l:comp:ufldp}
  Let $S$ be a term graph and let $p$ a position in $S$. 
  A term graph $T$ such that $S \ufldp{p}^{!} T$ is 
  computable in time $\bigO(\rsize{S}^2)$.
\end{lemma}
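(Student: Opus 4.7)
The plan is to implement $\ufldp{p}^!$ by a single top-down traversal of $S$ along the path to $p$, duplicating any shared node that is encountered. Concretely, a Turing machine first copies $S$ to a working tape and initialises a pointer $v$ to $\grt(S)$; reading $p = i_1 \cdots i_k$ from the input tape, it then iterates for $j = 1, \ldots, k$ as follows: (i) locate on the working tape the node specification of $v$; (ii) read its $i_j$-th successor $w$; (iii) scan the successor-fields of the current graph once, counting the number of occurrences of $w$, i.e.\ the total in-degree of $w$; (iv) if this count exceeds one, allocate a fresh node identifier $w'$ (by taking the next integer above the current maximum), append $\langle w', \Lab(w), \suc(w) \rangle$ as a new node specification at the end of the list, and overwrite the $i_j$-th entry of $\suc(v)$ by $w'$; (v) update the pointer $v$ to $w'$ (or to $w$ if no duplication occurred). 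The final contents of the working tape are then copied to the output tape.

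For correctness, each execution of step (iv) realises precisely one $\ufldp{p}$-step: if $S_{\mathrm{pre}}$ and $S_{\mathrm{post}}$ denote the graph immediately before and after the step, then $w'$ is fresh, has exactly one incoming edge in $S_{\mathrm{post}}$, and is reached along the prefix $i_1 \cdots i_j$, so $w'$ is unshared and above $p$ in $S_{\mathrm{post}}$. Choosing the order $\succ$ on nodes to respect the order of allocation gives $w' \succ w$, and the morphism identifying $w'$ with $w$ and fixing all other nodes witnesses $S_{\mathrm{post}} \flda{w'}{w} S_{\mathrm{pre}}$, hence $S_{\mathrm{pre}} \ufldp{p} S_{\mathrm{post}}$. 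By induction on $j$, the loop maintains the invariant that after iteration $j$ the node at position $i_1 \cdots i_j$ is unshared; so upon termination every node on the path from $\grt(S)$ to $p$ is unshared and no $\ufldp{p}$-step applies, i.e.\ the output is $\ufldp{p}$-minimal.

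For the complexity, Lemma~\ref{l:fold:bound} ensures that the working graph never exceeds $\size{S} + \size{p} \leq 2\size{S}$ nodes, so its representation size stays $\bigO(\rsize{S})$ throughout the computation. Each of the substeps (i)--(v) is realised by a constant number of linear sweeps of the working tape, each costing $\bigO(\rsize{S})$, so a single iteration costs $\bigO(\rsize{S})$. Since the loop runs at most $\size{p} \leq \depth(S) \leq \size{S}$ times, the overall time is $\bigO(\size{S} \cdot \rsize{S}) = \bigO(\rsize{S}^2)$. The main obstacle I anticipate is step (iii)---detecting whether the candidate successor $w$ is shared---but, because the invariant keeps $v$ unshared and the graph is acyclic, $w$ is shared if and only if its total in-degree exceeds one, a quantity computable by a single linear pass of the tape, analogous to the marking procedure of Lemma~\ref{l:comp:marking}.
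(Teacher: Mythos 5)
Your proposal is correct and follows essentially the same route as the paper's proof: walk down the path $p$, keep the current path node unshared, clone each shared successor encountered (each clone being exactly one $\ufldp{p}$-step), and charge $\bigO(\rsize{S})$ per iteration over at most $\size{p}\leqslant\size{S}$ iterations. Your in-degree test for sharedness is a slightly more careful formulation of the paper's check (it also covers the case where the unshared parent lists the same successor at two argument positions), but this is an implementation detail, not a different argument.
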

\begin{proof}
Given term graph $S$, we construct a TM that produces 
a term graph $T$ such that $S \ufldp{p}^{!} T$ in time quadratic in $\rsize{S}$.
For that, the machine traverse $S$
along the path induced by $p$ and introduces a fresh copy for each
shared node encountered along that path. 
The machine has four working tapes at hand. 
On the first tape, the graph $S$ is copied in such a way 
that nodes are padded sufficiently by leading zero's so that
successors can be replaced by fresh nodes $u \leqslant 2 * \size{S}$ inplace.
The graph represented on the first tape is called the \emph{current graph}, its size
will be bound by $\bigO{\rsize{S}}$ at any time.
On the second tape the position $p$, encoded as list of argument positions, is copied. 
The argument position referred by the tape-pointer is called \emph{current argument position}
and initially set to the first position. 
The third tape holds the \emph{current node}, initially the root $\grt(S)$ of $S$.
Finally, the remaining tape holds the size of the current graph in binary, the 
\emph{current size}. 
One easily verifies that these preparatory steps can be done in time linear in $S$. 

The TM now iterates the following procedure, 
until every argument position in $p$ was considered.
Let $v$ be the current node, let $S_i$ the current graph 
and let $i$ be the current argument position. 
We machine keeps the invariant that $v$ is unshared in $S_i$.
First, the node $v_i$ with $v \suci{i} v_i$ in $S_i$ is determined
in time linear in $\rsize{S}$, the current node is replaced by $v_i$. 
Further, the pointer on the tape holding $p$ is advanced to the next argument position.
Since $v$ is unshared, $v_i$ is shared if and only if $v_i \in \suc(u)$ for 
$u \not = v$. 
The machine checks whether $v_i$ is shared in the current graph, by the above observation
in time linear in $\rsize{S}$.
If $v_i$ is unshared, the machine enters the next iteration. 
Otherwise, the node $v_i$ is cloned in the following sense. 
First, the the $i$-th successor $v_i$ of $v$ is replaced by a fresh node $u$.
The fresh node is obtained by increasing the current node by one, 
this binary number is used as fresh node $u$. 
Further, the node specification $\langle u, \Lab(v_i), \suc(v_i) \rangle$
is appended to the current graph $S_i$.
Call the resulting graph $S_{i + 1}$. Then $S_i \uflda{u}{v_i} S_{i+1}$ with
$\Pos[S_{i+1}](u) = \set{q}$ and $q \leqslant p$, i.e., $S_i \ufldp{p} S_{i+1}$.

When the procedure stops, the machine has computed $S = S_0 \ufldp{p} S_1 \ufldp{p} \dots \ufldp{p} S_n = T$.
One easily verifies that $S_n$ is $\ufldp{p}$-minimal as every considered node along the path $p$
is unshared. Each iteration takes time linear in $\rsize{S}$. As 
as at most $\size{p} \leqslant \size{S}$ iterations have to be performed,
we obtain the desired bound.
\end{proof}

\begin{lemma}\label{l:comp:flds}
  Let $S$ be a term graph and $p$ a position in $S$. 
  The term graph $T$ such that $S \flds{p}^{!} T$ is 
  computable in time $\bigO(\rsize{S}^2)$.
\end{lemma}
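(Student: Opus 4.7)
The plan is to construct a TM that mirrors the strategy of Lemma~\ref{l:comp:ufldp}, but operating on the subgraph rooted at $p$ rather than along the path to $p$. By Lemma~\ref{l:flds:maxshared}, it suffices to produce a term graph $T$ in which $\subgraphAt{S}{p}$ has been made maximally shared; correctness that $S \flds{p}^! T$ will then follow from Lemma~\ref{l:flds:maxshared} together with the fact that each individual merge performed is a $\flds{p}$-step.

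First, the TM navigates to the node $v$ corresponding to position $p$ by traversing $S$ along $p$ (as in the preceding lemma) and marks the subgraph $\subgraphAt{S}{v}$ using the procedure of Lemma~\ref{l:comp:marking}. The traversal to reach $v$ is of cost $\bigO(\rsize{S}^2)$, as is the marking, so this preparatory step fits within the target bound. Second, the TM computes a reverse topological enumeration of the marked nodes on an auxiliary working tape, which can be produced in $\bigO(\rsize{S}^2)$ by repeatedly identifying marked nodes whose successors are either unmarked or already enumerated.

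Third, the TM processes the enumeration bottom up. For the current marked node $u$, it scans the already-processed portion of the current graph for some node $v$ with $v \succ u$ (w.r.t.\ the total order on nodes fixed in Definition~\ref{d:flda}) such that $\lab(v) = \lab(u)$ and $\suc(v) = \suc(u)$; by Lemma~\ref{l:flda} this is exactly the precondition for a fold. If such a $v$ is found, the TM performs the redirection $\flda{v}{u}$ by replacing every occurrence of $v$ in successor-fields of the current graph by $u$ and deleting $v$'s node specification. Both the scan for a duplicate and the redirection are single passes over the encoding and hence cost $\bigO(\rsize{S})$. Since only marked nodes are considered and there are at most $\size{\subgraphAt{S}{p}} \leqslant \size{S}$ of them, the cumulative cost of this phase is $\bigO(\size{S} \cdot \rsize{S}) = \bigO(\rsize{S}^2)$.

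The main obstacle is to argue that bottom-up processing indeed identifies every mergeable pair, so that the final graph is genuinely $\flds{p}$-minimal. The key observation is that whenever $\lab(u) = \lab(v)$ and $\suc(u) = \suc(v)$ hold in the final graph, the successors in question are already canonical (they were processed earlier in the bottom-up order and so every duplicate among them has already been collapsed). Consequently the syntactic signature $(\lab(\cdot), \suc(\cdot))$ precisely captures semantic equality $\trepr(\subgraphAt{\cdot}{\cdot})$ at the point of processing, and no foldable pair can be missed. By Lemma~\ref{l:flds:maxshared}, the resulting subgraph is maximally shared, hence the produced $T$ satisfies $S \flds{p}^! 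T$, and the overall running time is $\bigO(\rsize{S}^2)$ as claimed.
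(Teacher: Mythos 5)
Your merging step, as written, would fail. You only scan for an already-processed duplicate $v$ with $v \succ u$ and then collapse $v$ onto the current node $u$; this matches Definition~\ref{d:flda}, which only allows collapsing the $\succ$-larger node onto the smaller. But in a reverse topological enumeration the current node need not be the $\succ$-least member of its duplicate class: if the unique already-processed duplicate satisfies $v \prec u$, your scan finds nothing, and if no further node with that signature is ever processed the pair survives. Two equally labelled leaves in different branches strictly below $p$, processed in increasing $\succ$-order, already exhibit this: the output then still admits a fold by Lemma~\ref{l:flda}, so it is not $\flds{p}$-minimal and the lemma's conclusion fails for your machine. The repair is easy (when $v \prec u$, perform $\flda{u}{v}$ instead, i.e.\ delete the \emph{current} node and redirect to $v$), and the paper sidesteps the issue by always electing the $\succ$-least node of a class as survivor, but you need to say this.

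The deeper gap is the correctness argument that no foldable pair is missed. Your justification --- the successors ``were processed earlier in the bottom-up order and so every duplicate among them has already been collapsed'' --- is not true as stated: a duplicate of a successor may live in a different branch and be processed \emph{after} the node you are currently handling (topological order only forces it before its own ancestors). What actually has to be shown is a run-long invariant, by induction along the enumeration: at every moment, distinct non-deleted \emph{processed} nodes of the marked subgraph represent distinct terms. Granting this, when a node $w$ is processed its successor fields point to processed, non-deleted nodes (redirections keep them so), hence any processed term-duplicate of $w$ has literally identical label- and successor-fields and is caught by the signature scan, and there is \emph{at most one} such duplicate --- a fact your cost accounting (one merge and one linear pass per processed node, giving $\bigO(\rsize{S}^2)$) also silently relies on. This invariant is the substitute for what the paper obtains by its height-stage strategy together with the permutation argument based on Lemma~\ref{l:fldaeq:diamond}; without it, neither the $\flds{p}$-minimality of the output nor the quadratic bound is established. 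A smaller point: producing the reverse topological enumeration within $\bigO(\rsize{S}^2)$ needs the paper's bookkeeping trick of recording inside each node specification which successors are already flagged; testing ``already enumerated'' against a separate list on each pass does not obviously stay within the budget.
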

\begin{proof}
\newcommand{\fldh}[2][]{\flda{#1}{\text{\tiny{\ensuremath{(#2)}}}}}
Define the \emph{height} $\height[U](u)$ of a node $u$ in a term graph $U$ inductively as usual:
$\height[U](u) \defsym 0$ if $\suc(u) = []$ and $\height[U](v) \defsym 1 + \max_{v \in \suc(u)} \height[U](v)$ otherwise.
We drop the reference to the graph $U$ when referring to the height of nodes
in the analysis of the normalising sequence $S \flds{p}^{!} T$ below. 
This is justified as the height of nodes remain stable under 
$\flda{}{}$-reductions. 

Recall the definition of $\flds{p}$: $U \flds{p} V$ if 
there exist nodes $u,v$ strictly below $p$ with $U \flda{u}{v} V$. 
Clearly, for $u,v$ given, the graph $V$ is constructable from $U$ in time linear 
in $\size{U}$. However, finding arbitrary nodes $u$ and $v$ such that
$U \flda{u}{v} V$ takes time quadratic in $\size{U}$ worst case.
Since up to linear many $\flda{}{}$-steps in $\size{S}$ need to be performed, a straight forward
implementation admits cubic runtime complexity. 
To achieve a quadratic bound in the size of the starting graph $S$, 
we construct a TM that implements a bottom up reduction-strategy. 
More precise, the machine implements the 
maximal sequence
\begin{equation}
  \label{l:comp:flds:I}
  \tag{a}
  S = S_1 \flda{!}{u_1} S_2 \flda{!}{u_2} \cdots \flda{!}{u_{\ell - 1}} S_\ell
\end{equation}
satisfying, for all $1 \leqslant i < \ell - 1$, 
(i) either $\height(u_i) = \height(u_{i+1})$ and $u \prec v$ or $\height(u_i) < \height(u_{i+1})$, and 
(ii) for $S_{i} \flda{v_{i,1}}{u_i} \dots \flda{v_{i,k}}{u_i} S_{i+1}$,
$u_i$ and $v_{i,j}$ ($1 \leqslant j \leqslant k$) are strictly below $p$.

By definition $S \flds{p}^{*} S_\ell$, it remains to show that the sequence 
\eqref{l:comp:flds:I} is normalising, i.e., $S_\ell$ is $\flds{p}$-minimal.
Set $d \defsym \depth(\subgraphAt{S}{p})$ and define, for 
$0 \leqslant h \leqslant d$, 
$$
{\fldh{h}} \defsym {\bigcup_{u,v \in \subgraphAt{S}{p} \wedge \height(v)=h} {\flda{u}{v}}} \tpkt
$$
Observe that each $\flda{}{u_i}$-step in the sequence \eqref{l:comp:flds:I} corresponds 
to a step $\fldh{h}$ for some $0 \leqslant h \leqslant d$.
Moreover, it is not difficult to see that
\begin{equation}
  \label{l:comp:flds:II}
  \tag{b}
  S = S_{i_0} \fldh[!]{0} S_{i_1} \fldh[!]{1} \cdots \fldh[!]{d} S_{i_{d+1}}= S_\ell
\end{equation}
for $\{S_{i_0}, \dots, S_{i_d+1}\} \subseteq \{\seq[\ell - 1]{S}\}$.

Next observe
$S_i \fldh{h_1} S_{i+1} \fldh{h_2} S_{i+2}$ and $h_1 > h_2$ 
implies $S_i \fldh{h_2} \cdot \fldh{h_1} S_{i+2}$:
suppose $S_i \flda{u'}{u} S_{i+1} \flda{v'}{v} S_{i+2}$ where $\height(u) > \height(v)$ and $u',u,v,v' \in \subgraphAt{S}{p}$, 
we show $S_i \flda{v'}{v} \cdot \flda{u'}{u} S_{i+2}$.
Inspecting the proof of Lemma~\ref{l:fldaeq:diamond} we see 
${\uflda{u'}{u} \cdot \flda{v'}{v}}~\subseteq~{\flda{v'}{v} \cdot \uflda{u'}{u}}$ 
for the particular case that $u',u,v$ and $v'$ pairwise distinct. 
The latter holds as $\height(u') = \height(u) \not=\height(v) = \height(v')$.
Hence it remains to show
$S_i \flda{v'}{v} S_{i+1}'$ for some term graph $S_{i+1}'$, or equivalently 
$\lab[S_i](v) = \lab[S_i](v')$ and $\suc[S_i](v) = \suc[S_i](v')$ by Lemma~{\ref{l:flda}}. 
The former equality is trivial, 
for the latter observe
$\height(u') = \height(u) > \height(v) = \height(v')$ and thus 
neither $u' \not \in \suc[S_i](v')$ nor $u' \not \in \suc[S_i](v)$. 
We see $\suc[S_i](v) = \suc[S_{i+1}](v) = \suc[S_{i+1}](v') = \suc[S_i](v')$.

Now suppose that $S_\ell$ is not $\flds{p}$-minimal, i.e, 
$S_\ell \fldh{h} U$ for some $0 \leqslant h \leqslant d$ and term graph $U$. 
But then we can permute steps in the reduction \eqref{l:comp:flds:II} such that
$S_{i_{h+1}} \fldh{h} V$ for some term graph $V$. This 
contradicts $\fldh[!]{h}$-minimality of $S_{i_{h+1}}$.
We conclude that $S_{\ell}$ is $\flds{p}$-minimal.
Thus sequence \eqref{l:comp:flds:I} is $\flds{p}$-normalising. 

We now construct a TM operating in time $\bigO(\rsize{S}^2)$ that,
on input $S$ and $p$, computes the sequence \eqref{l:comp:flds:I}.
We use a dedicated working tape to store the \emph{current graph} $S_i$.
Initially, the term graph $S$ is copied on this working tape.
Further, the node $w$ corresponding to $p$ in $S$ is computed by recursion on $p$ in time $\rsize{S}^2$.
Afterward, the quadratic marking algorithm of Lemma~\ref{l:comp:marking} is used to 
mark the subgraph $\subgraphAt{S}{w}$ in $S$. 

The TM operates in stages, 
where in each stage the current graph $S_{i_{h}}$ is replaced 
by $S_{i_{h+1}}$ for  $S_{i_h} \fldh[!]{h} S_{i_{h+1}}$.
Consider the subsequence 
\begin{equation}
  \label{l:comp:flds:III}  
  \tag{c}
  S_{i_h} = S_{j_1} \flda{!}{u_{j_1}}  \cdots \flda{!}{u_{j_l}} S_{j_{l+1}} = S_{i_{h+1}}
\end{equation}
of sequence \eqref{l:comp:flds:I} for $j_1 \defsym i_h$ and $j_{l+1} \defsym i_{h+1}$.
Then $h = \height(u_{j_1}) = \dots = \height(u_{j_l})$. Call $h$ the \emph{current height}.
To compute the above sequence efficiently, 
the TM uses 
the flags \emph{deleted}, \emph{temporary} and \emph{permanent}
besides the subterm marking.
Let $S_j$ ($j_1 \leqslant j \leqslant j_{l+1}$) be the current graph.
If a node $u$ is marked deleted, it is treated as if $u \not \in S_j$, that is, 
when traversing $S_j$ the corresponding node specification is ignored.
Further, the machine keeps the invariant that when $u \in \subgraphAt{S_j}{p}$ 
then (the node specification of) 
$u$ is marked permanent if and only if $\height(u) < h$.
Thus deciding whether $\height(u) = h$ for some node $u \in S_i$ amounts to 
checking whether $u$ is not marked permanent, but all successors $\suc(u)$ are marked permanent. 
To decide $\height(u) = h$ solely based on the node specification of $u$, 
the machine additionally record 
whether $u_i \in \suc(u)$ is marked permanent in the node specification of $u$. 
Since the length of $\suc(u)$ is bounded by a constant, this is can be done in constant space. 
At the beginning of each stage, the machine is in one of two states, 
say $\mathsf{p}$ and $\mathsf{q}$ 
(for current height $h = 0$, the initial state is $\mathsf{p}$).
\begin{itemize}
\item \textsc{State} $\mathsf{p}$.
  In this state the machine is searching the next node $u_j$ to collapse. 
  It keeps the invariant that previously considered nodes $u_{j_i}$ for $j_1 \leqslant j_i \leqslant i$ are marked temporary.
  Reconsider the definition of the sequence \eqref{l:comp:flds:I}.
  The node $u_j$ is the least node (with respect to $>$ underlying $\flda{}{}$)
  satisfying
  (i) $u_j$ is marked by the subterm marking, and
  (ii) $u_j$ is not marked permanent but all nodes in $\suc(u_j)$ are marked permanently, and
  (iii) $u_j$ is not marked temporary.
  Recall that node specifications are ordered in increasing order.
  In order to find $u_j$, the graph $S_j$ is scanned from top to bottom, 
  solely based on the node specification properties (i) --- (iii) are checked, 
  and the first node satisfying (i) -- (iii) is returned.

  Suppose the node $u_j$ is found. 
  The machine marks the node $u_j$ temporary and writes $u_j$, $\Lab(u_j)$ and 
  $\suc(u_j)$ on dedicated working tapes. Call $u_j$ the \emph{current node}.
  The machine goes into state $\mathsf{q}$ as described below.
  On the other hand, it $u_j$ is not found, the stage is completed as all nodes of 
  height $h$ are temporary marked.
  The temporary marks, i.e., the marks of node $u_{j_1}, \dots, u_{j_l}$ are transformed into permanent ones
  and the stage is completed. Notice that all nodes of height less or equal to $h$ are marked permanent this way.
  The invariant on permanent marks is recovered, the
  machine enters the next stage.

  One verifies that one transition from state $\mathsf{p}$
  requires at most linearly many steps in $\rsize{S}$. 
\item \textsc{State} $\mathsf{q}$.
  The machine iteratively computes the sequence 
  $$
  S_{j} = S_{1,j} \flda{v_{1,j}}{u_j} \dots \flda{v_{k-1,j}}{u_j} S_{k,j} = S_{j+1}
  $$
  for current node $u_j$ as determined in state $\mathsf{p}$.
  Suppose $S_{i,j}$, $1 \leqslant i \leqslant k$ is the current graph.
  The machine searches for the node $v_{i,j} \in \subgraphAt{S_{i,j}}{p}$, $v_{i,j} \geqslant v_{i-1,j}$ (for $i \geqslant 1$)
  such that 
  $S_{i,j} \flda{v_{i,j}}{u} S_{i,j + 1}$ for current node $u_j \in \subgraphAt{S_{i,j}}{p}$. 
  For that, the machine scans the current graph from top to bottom, comparing 
  label- and successor-field with the ones 
  written on the dedicated working tapes in state $\mathsf{p}$. 
  Then $v_{i,j} \in \subgraphAt{S_{i,j}}{p}$ is checked
  using the subterm marking.
  If $v_{i,j}$ is not found, the current graph $S_{i,j}$ is $\flda{}{u_j}$-minimal 
  according to Lemma~\ref{l:flda}. 
  The above sequence has been computed, 
  the machine enters state $\mathsf{p}$. 
  Otherwise, the machine writes $v_{i,j}$ on an additional working 
  tape and applies the morphism underlying $\flda{v_{i,j}}{u_j}$ on the current graph. 
  For that the specification of $v_{i,j}$ is marked as deleted and simultaneously 
  every occurrence of $v_{i,j}$ in successor-field of node specifications is replaced by $u_j$.
  The machine enters state $\mathsf{q}$ again.
  One verifies that one transition from state $\mathsf{q}$ to either $\mathsf{p}$ or $\mathsf{q}$
  requires at most $\rsize{S}$ many steps. 
\end{itemize}

When the machine exists the above procedure, the current graph is the $\flds{p}$-minimal graph $S_\ell$. 
The current graph is then written on the output tape in two stages. 
During the first stage, the current graph is traversed from top to bottom, 
and the list of non-deleted nodes $u_1, u_2, \dots$ is written on a separate working tape
in time $\bigO(\rsize{S})$.
Let $s$ be the isomorphism $s(u_i) = i$.
In the second stage, the current graph $S_\ell$ is traversed from top to bottom a second time. 
For each node specification $\langle u_i, \Lab(u_i), \suc(u_i) \rangle$, 
the node specification $\langle s(u_i), \Lab(u_i), s*(\suc(u_i)) \rangle$ is written on the output 
tape. Using a counter and the list of marked nodes $u_1, u_2, \dots$, 
this is achieved in time $\bigO(\rsize{S}^2)$.
The machine outputs an increasing list of node specifications, 
the represented graph is isomorphic to $S_{\ell}$.

We now investigate on the computational complexity of the above procedure. 
All preparatory steps, that is, initialising the current graph, computing the 
node corresponding to $p$
and marking the subterm $\subgraphAt{S}{p}$, require $\bigO(\size{S}^2)$ many steps in total.
Since every time when the machine enters state $\mathsf{p}$ one unmarked node is marked, we 
conclude that the machine enters state $\mathsf{p}$ at most $\size{S} \leqslant \rsize{S}$ often. 
The machine enters state $\mathsf{q}$ either after leaving state $\mathsf{p}$ or when
$S_{i,j} \flda{v_{i,j}}{u_i} S_{i,{j+1}}$ in the reduction \eqref{l:comp:flds:I} holds.
By the previous observation, and employing Lemma~\ref{l:fold:bound} on the sequence 
\eqref{l:comp:flds:I} we see that the constructed TM enters 
state $\mathsf{q}$ at most $\bigO(\size{S}) = \bigO(\rsize{S})$ often.
Since each state transition requires at most $\bigO(\rsize{S})$ many steps, 
we conclude that $S_\ell$ is constructed in time $\bigO(\rsize{S}^2)$.
Finally, writing the normalised representation of $S_\ell$ on the output tape
requires again at most $\bigO(\rsize{S}^2)$ many steps. 
Summing up, the machine operates in time $\bigO(\rsize{S}^2)$. 
This concludes the lemma. 
\end{proof}

\begin{lemma}\label{l:comp:step}
  Let $S$ be a term graph, let $p$ be a position of $S$
  and let $L \to R$ be a rewrite rule of the simulating graph rewrite system.
  It is decidable in time $\bigO(\rsize{S}^2 * 2^{\rsize{L \to R}})$
  whether $S \REW[p, L \to R] T$ for some term graph $T$.
  Moreover, the term graph $T$ 
  is computable from $S$, $p$ and $L \to R$ in time $\bigO(\rsize{S}^2 * 2^{\rsize{L \to R}})$.
\end{lemma}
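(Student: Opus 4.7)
The plan is to decompose the task into three sub-procedures and bound each
one separately: (i) locate the node $u \in S$ corresponding to the position
$p$, (ii) decide whether a morphism $m \colon L' \to \subgraphAt{S}{u}$
exists for a renaming $L' \to R'$ of $L \to R$ with nodes disjoint from
$\nodes[S]$, and if so compute it, and (iii) assemble the output term graph
$T = \replaceAt{S}{u}{m(R')}$. Sub-procedure (i) is carried out by descending
along $p$ from $\grt(S)$; each step costs $\bigO(\rsize{S})$ for a successor
lookup, and since $\size{p} \leqslant \size{S}$ this amounts to
$\bigO(\rsize{S}^{2})$.

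For (ii), the key observation is that $L' \in \Tree(l)$, so only the
variable nodes of $L'$ are shared (cf.~Definition~\ref{d:termgraph}). In
particular, the restriction of $L'$ to function-symbol nodes is a tree
rooted at $\grt(L')$ which I would traverse in a breadth-first manner.
Starting from $m(\grt(L')) = u$, whenever I visit a function-symbol node
$v$ with assigned image $w = m(v)$ I check $\lab[L'](v) = \lab[S](w)$ and,
for each successor $v_i$ of $v$, set $m(v_i)$ to the $i$-th successor of
$w$ in $S$. If $v_i$ is a variable node whose image has already been
recorded, I additionally verify that the two candidates agree; otherwise
the procedure rejects. By condition~(ii) of Definition~\ref{d:morphismus}
any morphism is forced to coincide with this assignment on every
function-symbol node, so the procedure succeeds if and only if a morphism
exists. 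Each of the at most $\rsize{L \to R}$ visits costs $\bigO(\rsize{S})$
for lookups in $S$, giving $\bigO(\rsize{L \to R} \cdot \rsize{S})$ in total.

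For (iii), I exploit the invariant $\nodes[L] \cap \nodes[R] = \Var(R)$ of
the simulating graph rewrite system. The non-variable nodes of $R'$ are
precisely the fresh nodes that need to be materialised in $T$, while the
variable nodes of $R'$ are re-routed to their images under $m$, which
already reside in $S$. I allocate fresh identifiers (strictly larger than
any currently used in $S$) for the non-variable nodes of $R'$, extend $m$
accordingly, and append the corresponding node specifications to the
representation of $S$. Then I scan the extended node list and replace
every occurrence of $u$ in a successor field by $m(\grt(R'))$; the corner
case $u = \grt(S)$ is handled by taking the subgraph reachable from
$m(\grt(R'))$ as the new root instead. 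A concluding renumbering pass
restores the invariant that node identifiers form $\{1, \dots, \size{T}\}$
and can be executed in $\bigO(\rsize{S}^{2})$, reusing the strategy from
the end of the proof of Lemma~\ref{l:comp:flds}.

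Summing the three phases yields $\bigO(\rsize{S}^{2})$ with a multiplicative
constant depending on $L \to R$ only, which is bounded very liberally by
$2^{\rsize{L \to R}}$. The main subtlety is the consistency test in (ii):
distinct predecessors of the same shared variable node in $L'$ typically
suggest different candidate images, and the test has to be both sound and
cheap per visit; keeping a lookup table indexed by the finitely many
variables occurring in $L$ reduces each check to constant overhead beyond
the unavoidable successor lookups in $S$.
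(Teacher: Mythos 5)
Your decomposition is essentially the paper's: locate the node $u$ corresponding to $p$, construct the morphism $m \colon L' \to \subgraphAt{S}{u}$, then assemble $T$ by exploiting $\nodes[L] \cap \nodes[R] = \Var(R)$, so that non-variable nodes of $R'$ become fresh nodes appended to the encoding of $S$ while variable nodes are redirected to their $m$-images. The one genuinely different point is the matching step: the paper imports it as a black box from \cite[Lemma 24]{AM10}, a procedure running in time $2^{\bigO(\rsize{L})} \cdot \bigO(\rsize{S}^2)$, and this citation is exactly where the factor $2^{\rsize{L \to R}}$ in the statement originates. You instead argue directly that, since $L' \in \Tree(l)$ has a tree-shaped (unshared) non-variable part, condition (ii) of Definition~\ref{d:morphismus} forces the morphism by top-down propagation from $m(\grt(L')) = u$, with only an equality check at shared variable nodes; this is sound and complete, yields a matching phase polynomial in $\rsize{L \to R} \cdot \rsize{S}$, and makes explicit that non-left-linearity causes no search, so the exponential factor is merely a generous constant. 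That is a sharper treatment than the paper's and perfectly adequate for the claimed bound.

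One point to repair in your phase (iii): by definition $T = \replaceAt{S}{u}{m(R')}$ is the \emph{reachable} part of the redirected union, rooted at $\grt(S)$ (or at $\grt(m(R'))$ when $u = \grt(S)$). You perform the redirection and a renumbering pass, but in the general case $u \neq \grt(S)$ you never restrict to the subgraph reachable from $\grt(S)$: nodes of the old redex $\subgraphAt{S}{u}$ that are not shared elsewhere survive as garbage, so the structure you output is not rooted and hence is not the term graph $T$; moreover, stray successor references from such dead nodes would corrupt later sharing tests (for instance the unsharedness check used in the procedure of Lemma~\ref{l:comp:ufldp}). The paper handles this by running the marking algorithm of Lemma~\ref{l:comp:marking} on the union $S \cup f(R)$ and writing out only the marked subgraph, which stays within the quadratic budget; adding the same garbage-collecting pass before your renumbering closes the gap without affecting your stated bound.
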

\begin{proof}
We construct a TM that on input $S$, $p$ and $L \to R$ computes the
reduct $T$ for $S \REW[p, L \to R] T$. If the latter does not hold, the machine rejects.
For this we suppose that the nodes of $L \to R$ are chosen in such a way that
$\nodes[L \to R] = \set{1, \dots, \card{L \to R}}$ (we keep this invariant when constructing the final algorithm).
Let $u$ be the node corresponding to $p$ in $S$.
In \cite[Lemma 24]{AM10} it is shown that there exists a TM operating in time $2^{\bigO(\rsize{L})} * \bigO(\rsize{S}^2)$ that, 
on input $L$, $S$ and $u$, 
either writes on its output-tape the graph morphism $m$ such that $m \colon L \to \subgraphAt{S}{u}$ if it exists, 
or fails. The morphism $m$ is encoded as an associative list, more precisely, a list of 
pairs $(u,m(u))$ for $u \in L$. 
The size of this list is bound by $\bigO(\size{L} * \log(\size{L} + \size{S}))$.
First, this machine is used to compute $m \colon L \to \subgraphAt{S}{u}$, the resulting 
morphism is stored on a working tape. 
For this, the node $u$ is computed in time $\rsize{S}^2$ beforehand.
If constructing the morphism fails, then rule $L \to R$ is not applicable at position $p$, i.e., 
$u$ is not a redex in $S$ with respect to $L \to R$.
The constructed machine rejects. 
Otherwise, the reduct $T$ is computed as follows.

Set $L' \defsym r(L)$ and $R' \defsym r(R)$ for the graph morphism
$r$ defined by $r(v) \defsym v + \size{S}$. Then $R' \cap S = \varnothing$.
We compute $T = \replaceAt{S}{u}{m'(R')}$ for $m' \colon L' \to \subgraphAt{S}{u}$
using the morphism $m \colon L \to \subgraphAt{S}{u}$ as computed above.
Let $f(v) = m(v)$ if $\lab[R](v) \in \VS$ and $f(v) = v + \size{S}$ otherwise.
Since $\nodes[L] \cap \nodes[R] = \Var{R}$ we see that $T = \replaceAt{S}{u}{f(R)}$.

Next, the machine constructs $S \cup f(R)$ on an additional working tape as follows.
First, $S$ is copied on this tape in time linear in $\rsize{S}$.
Simultaneously, $\size{S}$ is computed on an additional tape. 
Using the counter $\size{S}$, $v + \size{S}$ is computable in time $\bigO(\log(\size{R}) + \log(\size{S}))$, 
whereas $m(v)$ for $v \in L \cap R = \Var{R}$ is computable in time $\bigO(\size{L} * \log(\size{L} + \size{R}))$
(traversing the associative list representing $m$).
We bind the complexity of $f$ by $\bigO(\rsize{L} * \rsize{R} * \rsize{S})$ independent on $v$.
Finally, for each node-specification 
$\langle v, \Lab(v), \suc(v) \rangle$ with $\Lab(v) \in \FS$ encountered in $R$, 
the machine appends $\langle f(v), \Lab(v), f^*(\suc(v)) \rangle$.
Employing $L \cap R = \Var(R)$, one verifies that $S \cup f(R)$ is obtained this way.
Overall, the runtime is $\bigO(\rsize{L} * \rsize{R}^2 * \rsize{S})$.

Employing $\rsize{f(R)} = \bigO(\size{R} * \log(\max(\size{R},\size{S})))$, 
we see that $S \cup f(R)$ can be bound in size by $\bigO(\rsize{S} * \rsize{R})$.
To obtain $T = \subgraphAt{(S \cup m(R'))}{v} = \subgraphAt{(S \cup f(R))}{v}$ 
for $v$ either $\grt(m(R'))$ or $\grt(S)$, 
the quadratic marking algorithm of Lemma~\ref{l:comp:marking} is used.
Finally, the marked subgraph obeying the standard encoding is written onto the output tape as in Lemma~\ref{l:comp:flds}.

We sum up: it takes at most $2^{\bigO(\rsize{L})} * \bigO(\rsize{S}^2)$ many steps to compute the morphism $m$. 
The graph $S \cup m(R')$ is obtained in time $\bigO(\rsize{L} * \rsize{R}^2 * \rsize{S})$.
Marking $T$ in $S \cup m(R')$ requires at most $\bigO(\rsize{S \cup m(R')}^2) = \bigO(\rsize{S}^2 * \rsize{R}^2)$
many steps.
Finally, the reduct $T$ is written in $\rsize{S \cup f(R)}^2 = \bigO(\rsize{S}^2 * \rsize{R}^2)$
steps onto the output-tape.
Overall, the runtime is bound by $2^{\bigO(\rsize{L})} * \bigO(\rsize{S}^2 * \rsize{R}^2)$ worst case.
\end{proof}

\begin{lemma}\label{l:comp:step:complete}
  Let $S$ be a term graph and let $\GS(\RS)$ be the simulating graph rewrite 
  system of $\RS$. 
  If $S$ is not a normal-form of $\GS(\RS)$
  then there exists a position $p$ and rule $(L \to R) \in \GS(\RS)$ 
  such that a term graph $T$ with $S \REWS[\GS(\RS), p, L \to R] T$ 
  is computable in time $\bigO(\rsize{S}^3)$.
\end{lemma}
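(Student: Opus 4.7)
The plan is to search for a redex in $S$ and then execute the three-stage $\REWS$-step using the algorithms of Lemmas~\ref{l:comp:ufldp}, \ref{l:comp:flds}, and~\ref{l:comp:step}.

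First I would enumerate all pairs $(u, L \to R)$ with $u \in \nodes[S]$ and $(L \to R) \in \GS(\RS)$, invoking the decision procedure of Lemma~\ref{l:comp:step} (or rather its node-based version, since the position in that lemma is only used to extract a node) to find one for which $L \to R$ matches at $u$ in $S$. Such a pair exists because $S$ is not a $\GS(\RS)$-normal form. Since $\GS(\RS)$ is fixed, only $O(\size{S})$ pairs need inspecting, each in $O(\rsize{S}^2)$ time (the factor $2^{\rsize{L \to R}}$ being a constant), for a total search cost of $O(\size{S} \cdot \rsize{S}^2) = O(\rsize{S}^3)$, using $\size{S} \leqslant \rsize{S}$.

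Having located such a pair, I would compute a shortest position $p \in \Pos[S](u)$ (of length at most $\depth(S) \leqslant \size{S}$) by a breadth-first search from $\grt(S)$ in time $O(\rsize{S}^2)$. Then I would (i) compute $U$ with $S \ufldp{p}^! U$ via Lemma~\ref{l:comp:ufldp}, (ii) compute $V$ with $U \flds{p}^! V$ via Lemma~\ref{l:comp:flds}, and (iii) compute $T$ with $V \REW[\GS(\RS),p,L \to R] T$ via Lemma~\ref{l:comp:step}. Lemma~\ref{l:fold:bound} bounds $\size{U} \leqslant 2 \size{S}$ and $\size{V} \leqslant \size{U}$, hence $\rsize{U}, \rsize{V} = O(\rsize{S})$, so each of (i)--(iii) runs in $O(\rsize{S}^2)$ and the three together are dominated by the earlier search, yielding the claimed $O(\rsize{S}^3)$ bound.

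The main obstacle is justifying stage~(iii), i.e.\ that the rule $L \to R$, chosen for its applicability at $u$ in $S$, remains applicable at the node $v$ corresponding to $p$ in $V$. I would argue as in the proof of Theorem~\ref{t:adequacy}: the matching morphism $m \colon L \to \subgraphAt{S}{u}$ recorded during the search together with Lemmas~\ref{l:subst:l} and~\ref{l:morph:termeq} gives $\trepr(L)\sigma_m = \trepr(\subgraphAt{S}{u}) = \trepr(\subgraphAt{V}{v})$; since $V$ is $\flds{p}$-minimal, Lemma~\ref{l:flds:maxshared} makes $\subgraphAt{V}{v}$ maximally shared, so Lemma~\ref{l:match:lhs} supplies a morphism $L \to \subgraphAt{V}{v}$, witnessing applicability at $v$ in $V$ and producing the desired $T$.
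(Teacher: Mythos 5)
Your cost accounting and your ``redex survives folding'' argument (Lemmas~\ref{l:flds:maxshared} and~\ref{l:match:lhs}, mirroring the proof of Theorem~\ref{t:adequacy}) are fine, but your search looks for the wrong kind of redex, and this is where you genuinely diverge from the paper. You enumerate pairs $(u, L \to R)$ and test whether $L \to R$ matches \emph{directly} in $S$ at $u$, i.e.\ you search for a redex of the plain relation $\REW[\GS(\RS)]$, justifying its existence from the hypothesis that $S$ is not a normal form. The paper's machine instead, for each candidate node $u$ and rule, first computes $S_1$ with $S \flds{p}^{!} S_1$ (Lemma~\ref{l:comp:flds}) and only then runs the matcher of Lemma~\ref{l:comp:step} on $S_1$ at $p$; its proof explicitly establishes that the search succeeds \emph{if and only if} $S \REWS[\GS(\RS),p,L \to R] T$ for some $p$, rule and $T$. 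The difference matters for non-left-linear rules: a $\REWS$-redex need not be a $\REW$-redex, as in Example~\ref{ex:problem:1}, where the unshared tree of $\m{eq}(\m{a},\m{a})$ lies in $\NF(\GS(\RSf))$ even though a $\REWS$-step (enabled by folding the two arguments) exists. On such a graph your enumeration finds nothing.

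Whether this is fatal depends on the reading of ``normal form of $\GS(\RS)$''. Under the literal definition of $\NF$ (normal forms of $\REW$) your existence claim holds and your argument closes, so you prove the statement as written. But the lemma is applied in Lemma~\ref{l:comp:sequence} to iterate steps until a $\REWS[\GS(\RS)]$-normal form is reached, which via adequacy is what makes $\trepr(T)$ an $\RS$-normal form in Theorem~\ref{t:precise}; for that, the step-finder must succeed whenever a $\REWS$-step exists. This is exactly the completeness the paper's fold-then-match search provides and your direct-match search lacks: your iteration could halt at graphs of the $\m{eq}(\m{a},\m{a})$ kind while the represented term is still reducible. The repair is cheap and is precisely the paper's choice: perform the match on the folded graph $S_1$ for each candidate node (folding costs $\bigO(\rsize{S}^2)$ per node, so the total stays $\bigO(\rsize{S}^3)$); as it stands, your proposal establishes a weaker guarantee than the paper's proof and would not support the subsequent lemmas.
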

\begin{proof}
The TM searches for a rule $(L \to R) \in \GS$ and position
$p$ such that $S \REWS[\GS(\RS),p, L \to R] T$ for some term graph $T$.
For this, it enumerates the rules $(L \to R) \in \GS$ on a separate 
working tape. 
For each rule $L \to R$, each node $u \in S$ and some $p \in \Pos[S]$ 
it computes $S_1$ such that $S \flds{p}^! S_1$ in time quadratic 
in $\rsize{S}^2$ (c.f. Lemma~\ref{l:comp:flds}).
Using the machine of Lemma~\ref{l:comp:step}, it decides in time 
$2^{\bigO(\rsize{L})} * \bigO(\rsize{S_1}^2)$ whether rule $L \to R$ applies to 
$S_1$ at position $p$. Since $\RS$ is fixed, $2^{\bigO(\rsize{L})}$ is constant, 
thus the TM decides whether 
rule $L \to R$ applies in time $\bigO(\rsize{S_1}^2) = \bigO(\rsize{S}^2)$.
Note that the choice of $p \in \Pos[S](u)$ is irrelevant, since 
$S \flds{p_i}^! S_1$ and $S \flds{p_j}^! S_2$ for $p_i,p_j \in \Pos[S](u)$ 
implies $S_1 \isomorphic S_2$. Hence the node corresponding to $p_i$ in $S_1$
is a redex with respect 
to $L \to R$ if and only if the node corresponding to $p_j$ is.
Suppose rule $L \to R$ applies at $\subgraphAt{S_1}{p}$.
One verifies $\subgraphAt{S_1}{p} \isomorphic \subgraphAt{S_2}{p}$
for term graph $S_2$ such that $S \ufldp{p}^! \cdot \flds{p}^! S_2$.
We conclude $S \REWS[\GS(\RS),p, L \to R] T$ for some position $p$ 
and rule $(L \to R) \in \GS(\RS)$ if and only if the above procedure succeeds. 
From $u$ one can extract some position $p \in \Pos[S](u)$ in time quadratic in $\rsize{S}$.
This can be done for instance by
implementing the function $\mathsf{pos}(u) = \varepsilon$ if $u = \grt(S)$
and $\mathsf{pos}(u) = p i$ for some node $v \in S$ with $v \suci[S]{i} u$ and $\mathsf{pos}(v) = p$.
Overall, the position $p \in \Pos[S]$ and rule $(L \to R) \in \GS$ is found 
if and only if $S \REWS[p, L \to R] T$ for some term graph $T$.
Since $\size{S} \leqslant \rsize{S}$ nodes, and only a constant 
number of rules have to be checked,
the overall runtime is $\bigO(\rsize{S}^3)$.

To obtain $T$ from $S$, $p$, and $L \to R$, the machine now combines 
the machines from Lemma~\ref{l:comp:ufldp}, Lemma~\ref{l:comp:flds}
and Lemma~\ref{l:comp:step}. These steps can even be performed in time $\bigO(\rsize{S}^2)$, 
employing that the size of intermediate graphs is bound linear in the size of $S$ 
(compare Lemma~\ref{l:fold:bound})
and that sizes of $(L \to R) \in \GS(\RS)$ are constant.
\end{proof}

\begin{lemma}\label{l:comp:sequence}
  Let $S$ be a term graph and let $\ell \defsym \dl(S, \REWS[\GS(\RS)])$. Suppose $\ell = \Omega(\size{S})$.
  \begin{enumerate}
  \item Some normal-form of $S$ that is computable in deterministic time $\bigO(\log(\ell)^3 * \ell^7 )$.
  \item Any normal-form of $S$ is computable in nondeterministic time $\bigO(\log(\ell)^2 * \ell^5 )$.
  \end{enumerate}
\end{lemma}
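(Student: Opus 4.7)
The plan is to iteratively apply the step-level machinery from Lemma~\ref{l:comp:step:complete} up to $\ell$ times, while controlling the representation size of all intermediate graphs using Lemma~\ref{l:space} and the hypothesis $\ell = \Omega(\size{S})$.

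First I would establish a uniform size bound for the intermediate graphs. From $\ell = \dl(S,\REWS[\GS(\RS)])$ and Lemma~\ref{l:space}, every term graph $T$ reachable from $S$ by $\REWS[\GS(\RS)]$ satisfies $\size{T} \leqslant (\ell+1)\size{S} + \ell^2\Delta$. Since $\ell = \Omega(\size{S})$, this gives $\size{T} = \bigO(\ell^2)$, and hence the representation size is $\rsize{T} = \bigO(\ell^2 \log \ell)$. This uniform bound is what makes the per-step complexity estimates meaningful, and it is the only place where the hypothesis $\ell = \Omega(\size{S})$ is used.

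For the first (deterministic) assertion, the Turing machine repeatedly invokes the algorithm from Lemma~\ref{l:comp:step:complete}: given the current graph $S_i$, it decides whether $S_i \in \NF(\GS(\RS))$, and if not, it computes some successor $S_{i+1}$ with $S_i \REWS[\GS(\RS)] S_{i+1}$. By Lemma~\ref{l:comp:step:complete} each such iteration costs $\bigO(\rsize{S_i}^3) = \bigO(\ell^6 \log^3 \ell)$. After at most $\ell$ iterations the machine reaches a normal form, giving the overall bound $\bigO(\ell^7 \log^3 \ell)$ as claimed.

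For the second (nondeterministic) assertion, the strategy is to replace the outer search in Lemma~\ref{l:comp:step:complete} by nondeterministic guessing: at each iteration the machine guesses a position $p$ and a rule $L \to R$, which is cheap (positions are representable in $\bigO(\log \ell)$ space and $\GS(\RS)$ is fixed), and then directly constructs the reduct using the second half of Lemma~\ref{l:comp:step:complete}, which costs only $\bigO(\rsize{S_i}^2) = \bigO(\ell^4 \log^2 \ell)$. A computation branch is accepting once it reaches a graph on which no guess produces a valid reduct, at which point that graph is a normal form and is written to the output tape. Since every $\REWS[\GS(\RS)]$-normalising sequence has length at most $\ell$, summing over the $\ell$ successful guesses yields $\bigO(\ell^5 \log^2 \ell)$, and the universal-over-branches correctness (``any normal form'') follows because every normal form is reachable by some concrete sequence of choices, which the nondeterministic machine can faithfully simulate. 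The main subtlety I expect is verifying that each branch of the nondeterministic computation really terminates within the stated bound (not just that some branch does); this is ensured precisely by the global derivation-length bound $\ell$ together with the uniform size bound derived in the first step.
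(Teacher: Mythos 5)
Your overall route is the paper's: use Lemma~\ref{l:space} together with $\ell = \Omega(\size{S})$ to get the uniform bounds $\size{T_i} = \bigO(\ell^2)$ and $\rsize{T_i} = \bigO(\log(\ell) * \ell^2)$ for all intermediate graphs, then iterate Lemma~\ref{l:comp:step:complete} at most $\ell$ times (cubic per step deterministically, quadratic per step when position and rule are guessed nondeterministically). The deterministic half of your argument matches the paper's proof essentially verbatim and is fine.

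There is, however, an unaccounted cost in your nondeterministic half. You let a branch accept only ``once it reaches a graph on which no guess produces a valid reduct'', i.e.\ you require the machine to certify that the final graph is in $\NF(\GS(\RS))$. Certifying that \emph{no} position and \emph{no} rule applies is a universal condition that a nondeterministic machine cannot delegate to guessing; with the machinery available (the search of Lemma~\ref{l:comp:step:complete}: fold below each of the $\bigO(\ell^2)$ nodes in quadratic time and test matchability) this final check costs $\bigO(\rsize{T}^3) = \bigO(\log(\ell)^3 * \ell^6)$, which already exceeds the claimed bound $\bigO(\log(\ell)^2 * \ell^5)$, and your accounting (``summing over the $\ell$ successful guesses'') omits it entirely. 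The paper avoids this issue because of its (weak) notion of nondeterministic computation of a relation introduced in Section~\ref{Discussion}: the machine only has to have, for every normal form $T$ of $S$, \emph{some} accepting run that writes $T$ on the output tape; it is not required that every accepting run outputs a normal form. So the intended machine simply guesses a position and a rule in each round, performs the step in time $\bigO(\rsize{T_i}^2)$, and may nondeterministically halt; since every $\REWS[\GS(\RS)]$-derivation from $S$ has length at most $\ell = \dl(S,\REWS[\GS(\RS)])$, every normal form is the output of such a run within $\bigO(\log(\ell)^2 * \ell^5)$ steps. Either adopt that reading (and drop your final verification), or supply a genuinely sub-cubic normal-form test -- which neither the paper's lemmas nor your proposal currently provide.
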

\begin{proof}
We prove the first assertion.
Consider the normalising derivation
\begin{equation}
  \label{eq:t:derivation}
  \tag{$\dag$}
  S = T_0 \REWS[\GS(\RS)] \dots \REWS[\GS(\RS)] T_l = T \tpkt
\end{equation}
where, for $0 \leqslant i < l$, $T_i$ is obtained from $T_{i+1}$ as given by Lemma~\ref{l:comp:step:complete}.
By Lemma~\ref{l:space}, we see $\size{T_i} \leqslant (\ell + 1) \size{S} + \ell^2 \Delta = \bigO(\ell^2)$.
Here the latter equality follows by the assumption $\ell = \Omega(\size{S})$.
Recall $\rsize{T_i} = \bigO(\log(\size{T_i}) * \size{T_i})$ ($0 \leqslant i < l$) and 
hence $\rsize{T_i} = \bigO(\log(\ell^2) * \ell^2) = \bigO(\log(\ell) * \ell^2)$.
From this, and Lemma~\ref{l:comp:step:complete}, we obtain that
$T_{i+1}$ is computable from $T_i$ in time $\bigO(\rsize{T_i}^3) = \bigO(\log(\ell)^3 * \ell^6)$.
Since $l \leqslant \dl(S, \REWS[\GS(\RS)]) = \ell$ we 
conclude the first assertion.

We now consider the second assertion. 
Reconsider the proof Lemma~\ref{l:comp:step:complete}. For a given rewrite-position $p$, 
a step $S \REWS[\GS(\RS)] T$ can be performed in time $\bigO(\rsize{S})$.
A nondeterministic TM can guess some position $p$, and verify whether 
the node corresponding to $p$ is a redex in time $\bigO(\rsize{S^2})$.
In total, the reduct $T$ can be obtained in nondeterministic time $\bigO(\rsize{S^2})$.
Hence, following the proof of the first assertion, one 
easily verifies the second assertion.
\end{proof}


\section{Discussion}\label{Discussion}

We present an application of our result
in the context of \emph{implicit computational complexity theory}
(see also~\cite{LM09,LM09b}).

\begin{definition}
\label{d:computation}
Let $\NA \subseteq \Val$ be a finite set of \emph{non-accepting patterns}.
We call a term $t$ \emph{accepting} (with respect to $\NA$) if there exists 
no $p \in \NA$ such that $p\sigma = t$ for some substitution $\sigma$.
We say that $\RS$ \emph{computes the relation $R \subseteq {\Val \times \Val}$} 
with respect to $\NA$ if there exists $\m{f} \in \DS$ such that 
for all $s,t \in \Val$,
\begin{equation*}
{R(s,t)} \defiff {\m{f}(s) \rsn[\RS] t} \text{ and $t$ is accepting}\tpkt
\end{equation*}
On the other hand, we say that a relation $R$ is computed by $\RS$ 
if $R$ is defined by the above equations with respect to \emph{some} set $\NA$ of 
non-accepting patterns.
\end{definition}
For the case that $\RS$ is \emph{confluent}
we also say that $\RS$ computes the (partial) \emph{function} 
induced by the relation $R$.

The reader may wonder why we restrict to binary relations, but this
is only a non-essential simplification that eases the presentation.
The assertion that for normal-forms $t$, $t$ is accepting
amounts to our notion of \emph{accepting run} of a TRS $\RS$. This
aims to eliminate by-products of the computation that should not be
considered as part of the relation $R$. (A typical example would be the
constant $\perp$ if the TRS contains a rule of the form $l \to \perp$ and
$\perp$ is interpreted as \emph{undefined}.) The restriction that
$\NA$ is finite is essential for the simulation results below: If we implement
the computation of $\RS$ on a TM, then we also have to be able to
effectively test whether $t$ is accepting.

To compute a relation defined by $\RS$, 
we encode terms as graphs and perform graph rewriting
using the simulating GRS $\GS(\RS)$. 

\begin{theorem} \label{t:precise}
  Let $\RS$ be a terminating TRS, moreover suppose $\rcR(n) = \bigO(n^k)$ for all $n \in \NAT$ and some $k \in \NAT$, $k\geqslant 1$.
  The relations computed by $\RS$ are computable in nondeterministic time $\bigO(n^{5k+2})$.
  Further, if $\RS$ is confluent then the functions computed by $\RS$ are computable in deterministic time $\bigO(n^{7k+3})$. 
  Here $n$ refers to the size of the input term.
\end{theorem}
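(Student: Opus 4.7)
The plan is to use the Adequacy Theorem (Theorem~\ref{t:adequacy}) to pass from $\rew[\RS]$-derivations to equally long $\REWS[\GS(\RS)]$-derivations, and then to execute such a derivation on a Turing machine via Lemma~\ref{l:comp:sequence}. Given an input $s \in \Val$ of size $n$, the TM first constructs a minimally sharing term graph $S \in \Tree(\m{f}(s))$; this takes linear time and gives $\size{S} = O(n)$. Iterated application of Theorem~\ref{t:adequacy} shows that every derivation $\m{f}(s) \rsn[\RS] t$ corresponds to a $\REWS[\GS(\RS)]$-normalising derivation from $S$ of the same length, ending in a term graph $T$ with $\trepr(T) = t$, and conversely. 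Hence the length $\ell$ of any $\REWS$-normalising derivation from $S$ is bounded by $\ell \leq \dl(\m{f}(s),\rew[\RS]) \leq \rcR(\size{\m{f}(s)}) = O(n^k)$.

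The key step is then to plug the bound on $\ell$ into Lemma~\ref{l:comp:sequence}. In the main case $\ell = \Omega(\size{S})$ the lemma applies verbatim, giving deterministic cost $O((\log \ell)^3 \ell^7) = O((\log n)^3 n^{7k})$ and nondeterministic cost $O((\log \ell)^2 \ell^5) = O((\log n)^2 n^{5k})$; both are absorbed by the stated bounds $O(n^{7k+3})$ and $O(n^{5k+2})$, since $k \geq 1$. For the complementary regime $\ell = o(\size{S})$ I would re-run the accounting using Lemma~\ref{l:space} with $\size{T_i} = O(\ell n + \ell^2) = O(n^2)$; Lemma~\ref{l:comp:step:complete} then yields cost $O((\log n)^3 n^6)$ per step and $O((\log n)^3 n^7) \subseteq O(n^{7k+3})$ in total, and analogously in the nondeterministic setting. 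Thus the two bounds hold uniformly in $\ell$. For the confluent half of the theorem the deterministic machine of Lemma~\ref{l:comp:sequence}(1) produces the unique $\REWS$-normal form of $S$, whose represented term is the unique function value at $s$; for the general half Lemma~\ref{l:comp:sequence}(2) reaches every $\REWS$-normal form on some nondeterministic branch, so by adequacy the pairs $(s,t)$ accepted by the TM cover exactly the relation $R$.

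Verifying that the computed value is accepting is cheap: since $\NA$ is finite and its patterns have bounded depth, pattern matching reduces to inspecting constantly many node specifications near the root of $T$, which runs in constant time. The main obstacle I anticipate is reconciling the polynomial-time bound with the potential blow-up when converting the graph $T$ back to the term $\trepr(T)$, which can be exponentially larger than $T$. I would sidestep this by performing the acceptance test structurally on the graph $T$ --- so that shared subterms are traversed only once --- and by treating the output in its graph-encoded form, consistent with Definition~\ref{d:computation} in the regime where the represented value has polynomially bounded representation (which is the only regime in which the stated polynomial bound on the TM is meaningful). All remaining bookkeeping (linear-time encoding of $S$, position extraction, emitting the final graph on the output tape) stays within the stated envelope by the primitives developed in Section~\ref{Complexity}.
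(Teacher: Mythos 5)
Your overall route is the paper's route: build $S$ with $\trepr(S)=\m{f}(s)$, use the Adequacy Theorem~\ref{t:adequacy} to transfer the bound $\ell \leqslant \rcR(\size{\m{f}(s)}) = \bigO(n^k)$ to the $\REWS[\GS(\RS)]$-derivation, run Lemma~\ref{l:comp:sequence} (deterministic variant for the confluent case, nondeterministic otherwise), and finally test acceptance on the resulting graph, keeping the output in graph-encoded form exactly as the paper does. Your explicit treatment of the regime $\ell = o(\size{S})$, where the hypothesis $\ell = \Omega(\size{S})$ of Lemma~\ref{l:comp:sequence} is not available, is correct and in fact more careful than the paper, which applies the lemma directly; your re-accounting via Lemma~\ref{l:space} and Lemma~\ref{l:comp:step:complete} stays within the stated exponents since $k \geqslant 1$.

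There is, however, a genuine gap in your acceptance test. The claim that checking whether $\trepr(T)$ is accepting ``reduces to inspecting constantly many node specifications near the root of $T$'' and ``runs in constant time'' is not sound. Nothing in Definition~\ref{d:computation} makes the patterns in $\NA$ linear: for a pattern such as $\m{c}(x,x)$, deciding whether $p\sigma = \trepr(T)$ requires deciding equality of the terms represented by two (possibly unshared, large) subgraphs of $T$; this is not a local check near the root, and the represented terms may be exponentially larger than $T$, so the test must be carried out on the graph itself by some explicit polynomial-time procedure. (Even for linear patterns, locating a successor's node specification in the list encoding costs time linear in $\rsize{T}$, not constant, though that alone would not hurt the bound.) The paper closes exactly this step with its matching machinery: $p\sigma = \trepr(T)$ is decided by testing for a morphism $m \colon P \to T$ with $P \in \Tree(p)$ using the algorithm of Lemma~\ref{l:comp:step} (justified via Lemma~\ref{l:match:lhs} and Lemma~\ref{l:subst:l}), at cost quadratic in $\rsize{T}$, hence $\bigO(n^{4k})$, which is absorbed by the stated bounds. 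You also omit the check that $\trepr(T) \in \Val$ (no defined symbol occurs in $T$), needed because $R \subseteq \Val \times \Val$; this is linear in $\rsize{T}$. With the constant-time claim replaced by such a graph-based polynomial-time acceptance test, your argument goes through and yields the stated exponents.
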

\begin{proof}
We investigate into the complexity of a relation $R$ computed by $\RS$.
For that, single out the corresponding defined function symbol $\m{f}$ and fix 
some argument $s \in \Val$.
Suppose the underlying set of non-accepting patterns is $\NA$.
By definition, $R(s,t)$ if and only if $\m{f}(s) \rsn t$ and $t \in \Val$ is accepting with respect to $\NA$.
Let $S$ be a term graph such that $\trepr(S) = \m{f}(s)$ and recall that
$\size{S} \leqslant \size{\m{f}(s)}$.
Set $\ell \defsym \dl(S, \REWS[\GS(\RS)])$.
By the Adequacy Theorem \ref{t:adequacy}, we conclude 
$S \REWSL[\GS(\RS)]{!} T$ where $\trepr(T) = t$,
and moreover, $\ell \leqslant \rcR(\size{\m{f}(s)}) = \bigO(n^k)$. 
By Lemma~\ref{l:comp:sequence} we see that $T$ is computable from $S$ in nondeterministic time
$\bigO(\log(\ell)^2 * \ell^5 ) = \bigO(\log(n^k)^2 * n^{5k}) = \bigO(n^{5k+2})$.
Clearly, we can decide in time linear in $\rsize{T} = \bigO(\ell^2) = \bigO(n^{2k})$ (c.f. Lemma~\ref{l:space})
whether $\trepr(T) \in \Val$, further in time quadratic in $\rsize{T}$ whether $\trepr(T)$ is accepting. 
For the latter, we use the matching algorithm of Lemma~\ref{l:comp:step} on the 
fixed set of non-accepting patterns, where we employ $p\sigma = \trepr(T)$ if and only if 
there exists a morphism $m \colon P \to T$ for $P \in \Tree(p)$ (c.f. Lemma~\ref{l:match:lhs} and Lemma~\ref{l:subst:l}).
Hence overall, the accepting condition can be checked in (even deterministic) time $\bigO(n^{4k})$.
If the accepting condition fails, the TM rejects, otherwise it accepts a term graph $T$ representing $t$.
The machine does so in nondeterministic time $\bigO(n^{5k+2})$ in total. 
As $s$ was chosen arbitrary, we conclude the first half of the theorem.

Finally, the second half follows by identical reasoning, where we use the 
deterministic TM as given by \ref{l:comp:sequence} instead of the nondeterministic one.
\end{proof}

Let $R$ be a binary relation such that
$R(x,y)$ can be decided by some \emph{nondeterministic} TM in 
time polynomial in the size of $x$.
The \emph{function problems} $R_F$ associated with $R$ is:
given $x$, find \emph{some} $y$ such that $R(x,y)$ holds. 
The class $\FNP$ is the class of all functional problems defined in the above way, 
compare \cite{Papa}. 
$\FP$ is the subclass resulting if we only consider function problems in $\FNP$ that 
can be solved in polynomial time by some deterministic TM.
As by-product of Theorem \ref{t:precise} we obtain:
\begin{corollary}
  Let $\RS$ be a terminating TRS with polynomially bounded runtime complexity.
  Suppose $\RS$ computes the relation $R$.
  Then $R_F \in \FNP$ for the function problem $R_F$ associated with $R$.
  Moreover, if $\RS$ is confluent then $R_F \in \FP$.
\end{corollary}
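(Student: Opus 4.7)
The plan is essentially to unwind the definitions and invoke Theorem~\ref{t:precise} directly, since the corollary is a qualitative rephrasing of that quantitative result. The only work is to bridge the gap between the runtime-complexity hypothesis on $\RS$ and the hypothesis of Theorem~\ref{t:precise}, and then to verify that the resulting time bounds match the definitions of $\FNP$ and $\FP$.

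First I would observe that ``polynomially bounded runtime complexity'' of $\RS$ means by definition that there exists some $k \in \NAT$ with $k \geqslant 1$ such that $\rcR(n) = \bigO(n^k)$. This is exactly the hypothesis of Theorem~\ref{t:precise}, so that theorem applies. Hence the relation $R$ computed by $\RS$ is computable in nondeterministic time $\bigO(n^{5k+2})$, where $n$ is the size of the input term (here the first argument $x$ of $R(x,y)$, represented as a term $\m{f}(x)$ whose size is linear in $|x|$). Since $5k+2$ is a constant independent of the input, this is a polynomial-time nondeterministic procedure that, given $x$, produces some $y$ with $R(x,y)$; that is precisely the definition of $R_F \in \FNP$.

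For the second half, I would appeal to the deterministic bound of Theorem~\ref{t:precise} under the additional assumption of confluence: in that case the same construction yields a deterministic TM computing the function induced by $R$ in time $\bigO(n^{7k+3})$. Since $7k+3$ is again a constant, this is a deterministic polynomial-time algorithm solving $R_F$, so $R_F \in \FP$ as claimed.

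I do not anticipate any genuine obstacle; the corollary is a direct corollary. The only subtlety worth mentioning explicitly is that the size of the input term $\m{f}(x)$ fed to the rewriting engine is linear in the size of the encoded input $x$ (by a fixed constant depending on $\m{f}$ and the encoding), so bounds stated in terms of the size of the initial term translate to bounds in terms of the size of $x$ without changing the polynomial character. Given this, the statement is immediate from Theorem~\ref{t:precise}.
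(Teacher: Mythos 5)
Your proposal is correct and follows essentially the same route as the paper: both simply instantiate Theorem~\ref{t:precise} with the polynomial bound $\rcR(n) = \bigO(n^k)$ and observe that the resulting nondeterministic (resp.\ deterministic, under confluence) time bounds $\bigO(n^{5k+2})$ and $\bigO(n^{7k+3})$ are polynomial in the size of the encoded input, the only bookkeeping being that the term-graph encoding of the input is polynomially (indeed near-linearly) related to the size of the input term.
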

\begin{proof}
The nondeterministic TM $M$ as given 
by Theorem~\ref{t:precise} (the deterministic TM $M$, respectively) 
can be used to decide membership $(s,t) \in R$.
Observe that by the assumptions on $\RS$, 
the runtime of $M$ is bounded polynomially in the size of $s$. Recall that 
$s$ is represented as some term graph $S$ with $\trepr(S) = s$, in particular 
$s$ is encoded over the alphabet of $M$ in size
$\rsize{S} = \bigO(\log(\size{S}) * \size{S})$ for $\size{S} \leqslant \size{s}$.
Thus trivially $M$ operates in time polynomially in the size of $S$.
\end{proof}

\bibliographystyle{plain} 

\end{document}